\providecommand{\keywords}[1]{\bigskip\textbf{\textit{Index terms---}} #1}
\let\emptyset\varnothing
\newcommand{\eqdef}{\vcentcolon=}
\newcommand{\diag}{\operatorname{diag}}
\newcommand{\spanof}{\operatorname{span}}
\newcommand{\bbC}{\mathbb{C}}
\newcommand{\calX}{\mathcal{X}}
\newcommand{\calY}{\mathcal{Y}}
\newcommand{\calM}{\mathcal{M}}
\newcommand{\calN}{\mathcal{N}}
\newcommand{\calR}{\mathcal{R}}
\newcommand{\calU}{\mathcal{U}}
\newcommand{\calV}{\mathcal{V}}
\newcommand{\calG}{\mathcal{G}}
\newtheorem{theorem}{Theorem}[section]
\newtheorem{lemma}[theorem]{Lemma}
\newtheorem{corollary}[theorem]{Corollary}
\newtheorem{proposition}[theorem]{Proposition}
\newtheorem{definition}[theorem]{Definition}
\begin{document}
\title{Identifiability in Blind Deconvolution with Subspace or Sparsity Constraints\thanks{This work was supported in part by the National Science Foundation (NSF) under Grants CCF 10-18789 and IIS 14-47879. Some of these results will be presented at SPARS 2015 \cite{Li2015c}}}
\author{Yanjun Li\thanks{Department of Electrical and Computer Engineering and Coordinated Science Laboratory, University of Illinois, Urbana-Champaign, IL 61801, USA.}
\and
Kiryung Lee\thanks{Department of Statistics and Coordinated Science Laboratory, University of Illinois, Urbana-Champaign, IL 61801, USA.}
\and
Yoram Bresler\footnotemark[2]}
\maketitle

\doublespacing

\abstract
Blind deconvolution (BD), the resolution of a signal and a filter given their convolution, arises in many applications. Without further constraints, BD is ill-posed. In practice, subspace or sparsity constraints have been imposed to reduce the search space, and have shown some empirical success. However, existing theoretical analysis on uniqueness in BD is rather limited. As an effort to address the still mysterious question, we derive sufficient conditions under which two vectors can be uniquely identified from their circular convolution, subject to subspace or sparsity constraints. These sufficient conditions provide the first algebraic sample complexities for BD. We first derive a sufficient condition that applies to almost all bases or frames. For blind deconvolution of vectors in $\bbC^n$, with two subspace constraints of dimensions $m_1$ and $m_2$, the required sample complexity is $n\geq m_1m_2$. Then we impose a sub-band structure on one basis, and derive a sufficient condition that involves a relaxed sample complexity $n\geq m_1+m_2-1$, which we show to be optimal. We present the extensions of these results to BD with sparsity constraints or mixed constraints, with the sparsity level replacing the subspace dimension. The cost for the unknown support in this case is an extra factor of $2$ in the sample complexity.

\keywords{uniqueness, bilinear inverse problem, channel identification, equalization, multipath channel}

\section{Introduction}
Blind deconvolution (BD) is the bilinear inverse problem of recovering the signal and the filter simultaneously given the their convolutioin or circular convolution. It arises in many applications, including blind image deblurring \cite{Kundur1996}, blind channel equalization \cite{LitwinL.R.1999}, speech dereverberation \cite{Naylor2010}, and seismic data analysis \cite{Yilmaz2001}. Without further constraints, BD is an ill-posed problem, and does not yield a unique solution. A variety of constraints have been introduced to exploit the properties of natural signals and reduce the search space. Examples of such constraints include positivity (the signals are non-negative), subspace constraint (the signals reside in a lower-dimensional subspace) and sparsity (the signals are sparse over some dictionary). In this paper, we focus on subspace or sparsity constraints, which can be imposed on both the signal and the filter. Consider the example of blind image deblurring: a natural image can be considered sparse over the a wavelet dictionary or the discrete cosine transform (DCT) dictionary. The support of the point spread function (PSF) is usually significantly smaller than the image itself. Therefore the filter resides in a lower-dimensional subspace. These priors serve as constraints or regularizers \cite{Chan1998,Herrity2008b,Krishnan2011,Asif2009,Ahmed2014}. With a reduced search space, BD can be better-posed. However, despite the success in practice, the theoretical results on the uniqueness in BD with a subspace or sparsity constraint are limited.

Early works on the identifiability in blind deconvolution studied multichannel blind deconvolution with finite impulse response (FIR) models \cite{Moulines1995,Abed-meraim1997}, in which sparsity was not considered. For single channel blind deconvolution, sparsity was imposed as a prior without theoretical justification \cite{Chan1998,Herrity2008b,Asif2009,Krishnan2011,Repetti2015}.

Recently, recasting bilinear or quadratic inverse problems, such as blind deconvolution \cite{Ahmed2014} and phase retrieval \cite{Candes2013a}, as rank-$1$ matrix recovery problems by ``lifting'' has attracted a lot of attention. 
Choudhary and Mitra \cite{Choudhary2013a} adopted the lifting framework and showed that the identifiability in BD (or any bilinear inverse problem) hinges on the set of rank-$2$ matrices in a certain nullspace. In particular, they showed a negative result that the solution to blind deconvolution with a canonical sparsity prior, that is, sparsity over the natural basis, is \emph{not} identifiable \cite{Choudhary2014a}. However, the identifiability of signals that are sparse over other dictionaries has not been analyzed.

Using the lifting framework, Ahmed et al. \cite{Ahmed2014} showed that BD with subspace constraints is identifiable up to scaling. More specifically, if the signal subspace follows a random Gaussian model, and the filter subspace satisfies some coherence conditions, convex programming was shown to recover the signal and the filter up to scaling with high probability, when the dimensions of the subspaces $m_1$ and $m_2$ are in a near optimal regime $m_1+m_2 = O(n)$, where $n$ denotes the length of the signal. Ling and Strohmer \cite{Ling2015} extended the model in \cite{Ahmed2014} to blind deconvolution with mixed constraints: the signal is sparse over a random Gaussian dictionary or a randomly subsampled partial Fourier matrix, and the filter resides in a subspace that satisfies some coherence condition. They showed that the signal and the filter can be simultaneously identified with high probability using $\ell_1$ norm minimization (instead of nuclear norm minimization as in \cite{Ahmed2014}) when the sparsity level $s_1$ and the subspace dimension $m_2$ satisfy $s_1m_2=O(n)$. Lee et al. \cite{Lee2015} further extended the model to blind deconvolution with sparsity constraints on both the signal and the filter, and showed successful recovery with high probability using alternating minimization when the sparsity levels $s_1$ and $s_2$ satisfy $s_1+s_2=O(n)$. A common drawback of these works is that the probabilistic assumptions on the bases or frames are very limiting in practice. On the positive side, these identifiability results are constructive, being demonstrated by establishing performance guarantees of algorithms. However, these guarantees too are shown only in some probabilistic sense.

To overcome the limitations of the lifting framework, the authors of this paper introduced a more general framework for the identifiability in bilinear inverse problems \cite{Li2015}, namely, identifiability up to transformation groups. We showed that two vectors $x,y$ are identifiable up to a transformation group given their image under a bilinear mapping if:
\begin{enumerate}
	\item $x$ is identifiable up to the transformation group;
	\item once $x$ is identified, the recovery of $y$ is unique.
\end{enumerate}
For multichannel blind deconvolution, we were able to derive identifiability results \cite{Li2015} under subspace, joint sparsity or sparsity constraints within our framework.

In this paper, we address the identifiability in single channel blind deconvolution up to scaling under subspace or sparsity constraints. We present the first algebraic sample complexities for BD with fully deterministic signal models. The rest of the paper is organized as follows. We formally state the problem setup in Section \ref{sec:probstat}. In Section \ref{sec:gebf}, we derive sufficient conditions for BD with generic bases or frames, using the lifting framework. In Section \ref{sec:sbsb}, we derive much less demanding sufficient conditions for BD with a sub-band structured basis, using the framework in \cite{Li2015}. Notably, the sample complexities of the sufficient conditions in this case match those of corresponding necessary conditions, and hence are optimal. We conclude the paper in Section \ref{sec:conclusions} with some remarks and open questions.

\section{Problem Statement}\label{sec:probstat}
\subsection{Notations}
We state the notations that will be used throughout the paper. We use lower-case letters $x$, $y$, $z$ to denote vectors, and upper-case letters $D$ and $E$ to denote matrices. We use $I_n$ to denote the identity matrix and $F_n$ to denote the normalized discrete Fourier transform (DFT) matrix. The DFT of the vector $x\in\bbC^n$ is denoted by $\widetilde{x}=F_nx$. We use $\mathbf{1}_{m,n}$ to denote a matrix whose entries are all ones and $\mathbf{0}_{m,n}$ to denote a matrix whose entries are all zeros. The subscripts stand for the dimensions of these matrices. We say that a vector is \emph{non-vanishing} if all its entries are nonzero. Unless otherwise stated, all vectors are column vectors. The dimensions of all vectors and matrices are made clear in the context. 

The projection operator onto a subspace $\calV$ is denoted by $P_\calV$.
The nullspace and the range space of a linear operator are denoted by $\calN(\cdot)$ and $\calR(\cdot)$, respectively. 
We use $\Omega_\calX,\Omega_\calY$ to denote constraint sets. The Cartesian product of two sets are denoted by $\Omega_\calX\times \Omega_\calY$. The pair $(x,y)\in \Omega_\calX\times \Omega_\calY$ represents an element of the Cartesian product. 
We use $./$ and $\odot$ to denote entrywise division and entrywise product, respectively. Circular convolution is denoted by $\circledast$. Kronecker product is denoted by $\otimes$. The direct sum of two subspaces is denoted by $\oplus$.

We use $j,k$ to denote indices, and $J,K$ to denote index sets. If the universal index set is $\{1,2,\cdots,n\}$, then $J,K$ are subsets. We use $|J|$ to denote the cardinality of $J$. We use $J^c$ to denote the complement of $J$. We use superscript letters to denote subvectors or submatrices. For example, $x^{(J)}$ represents the subvector of $x$ consisting of the entries indexed by $J$. The scalar $x^{(j)}$ represents the $j$th entry of $x$. The submatrix $D^{(J,K)}$ has size $|J|\times |K|$ and consists of the entries indexed by $J\times K$. The vector $D^{(:,k)}$ represents the $k$th column of the matrix $D$. The colon notation is borrowed from MATLAB.

We say a property holds for almost all signals (generic signals) if the property holds for all signals but a set of measure zero.

\subsection{Blind Deconvolution}
In this paper, we study the blind deconvolution problem with the circular convolution model. It is the joint recovery of two vectors $u_0\in\bbC^n$ and $v_0\in\bbC^n$, namely the signal and the filter\footnote{Due to symmetry, the name ``signal'' and ``filter'' can be used interchangeably.}, given their circular convolution $z=u_0\circledast v_0$, subject to subspace or sparsity constraints. The constraint sets $\Omega_\calU$ and $\Omega_\calV$ are subsets of $\bbC^n$. 
\begin{align*}
\text{find}~~&(u,v),\\
\text{s.t.}~~&u\circledast v = z,\\
& u \in \Omega_\calU,~ v \in \Omega_\calV.
\end{align*}
We consider the following scenarios:
\begin{enumerate}
	\item \emph{(Subspace Constraints)} The signal $u$ and the filter $v$ reside in lower-dimensional subspaces spanned by the columns of $D\in \bbC^{n\times m_1}$ and $E\in \bbC^{n\times m_2}$, respectively. The matrices $D$ and $E$ have full column ranks. Therefore,
	\begin{align*}
	\Omega_\calU &= \left\{u\in\bbC^n: u=Dx \text{ for some } x\in\bbC^{m_1}\right\}, \\
	\Omega_\calV &= \left\{v\in\bbC^n: v=Ey \text{ for some } y\in\bbC^{m_2}\right\}.
	\end{align*}
	\item \emph{(Sparsity Constraints)} The signal $u$ and the filter $v$ are sparse over given dictionaries formed by the columns of $D\in \bbC^{n\times m_1}$ and $E\in \bbC^{n\times m_2}$, with sparsity level $s_1$ and $s_2$, respectively. The matrices $D$ and $E$ are bases or frames that satisfy the spark condition \cite{Donoho2003}: the spark, namely the smallest number of columns that are linearly dependent, of $D$ (resp. $E$) is greater than $2s_1$ (resp. $2s_2$).
	Therefore,
	\begin{align*}
	\Omega_\calU &= \left\{u\in\bbC^n: u=Dx \text{ for some } x\in\bbC^{m_1}\text{ s.t. }\|x\|_0\leq s_1 \right\}, \\
	\Omega_\calV &= \left\{v\in\bbC^n: v=Ey \text{ for some } y\in\bbC^{m_2}\text{ s.t. }\|y\|_0\leq s_2 \right\}.
	\end{align*}
	\item \emph{(Mixed Constraints)} The signal $u$ is sparse over a given dictionary $D\in \bbC^{n\times m_1}$, and the filter $v$ resides in a lower-dimensional subspace spanned by the columns of $E\in \bbC^{n\times m_2}$. The matrix $D$ satisfies the spark condition, and $E$ has full column rank. Therefore,
	\begin{align*}
	\Omega_\calU &= \left\{u\in\bbC^n: u=Dx \text{ for some } x\in\bbC^{m_1}\text{ s.t. }\|x\|_0\leq s_1 \right\}, \\
	\Omega_\calV &= \left\{v\in\bbC^n: v=Ey \text{ for some } y\in\bbC^{m_2}\right\}.
	\end{align*}
\end{enumerate}

In all three scenarios, the vectors $x$, $y$, and $z$ reside in Euclidean spaces $\bbC^{m_1}$, $\bbC^{m_2}$ and $\bbC^{n}$. With the representations $u=Dx$ and $v=Ey$, it is easy to verify that $z=u\circledast v=(Dx)\circledast(Ey)$ is a bilinear function of $x$ and $y$. 
Given the measurement $z=(Dx_0)\circledast(Ey_0)$, the blind deconvolution problem can be rewritten in the following form:
\begin{align*}
\text{(BD)}\qquad\text{find}~~&(x,y)\\
\text{s.t.}~~&(Dx)\circledast (Ey) = z\\
& x \in \Omega_\calX,~ y \in \Omega_\calY
\end{align*}
If $D$ and $E$ satisfy the full column rank condition or spark condition, then 
the uniqueness of $(u,v)$ is equivalent to the uniqueness of $(x,y)$.
For simplicity, we will discuss problem (BD) from now on. The constraint sets $\Omega_\calX$ and $\Omega_\calY$ depend on the constraints on the signal and the filter. For subspace constraints, $\Omega_\calX=\bbC^{m_1}$, $\Omega_\calY = \bbC^{m_2}$. For sparsity constraints, $\Omega_\calX =\{x\in \bbC^{m_1}: \|x\|_0\leq s_1\}$, $\Omega_\calY = \{y\in \bbC^{m_2}: \|y\|_0\leq s_2\}$.

\subsection{Identifiability up to Scaling}
An important question concerning the blind deconvolution problem is to determine when it admits a unique solution. The BD problem suffers from scaling ambiguity.
For any nonzero scalar $\sigma \in \bbC$ such that $\sigma x_0 \in \Omega_\calX$ and $\frac{1}{\sigma} y_0 \in \Omega_\calY$, $(D(\sigma x_0))\circledast (E(\frac{1}{\sigma} y_0))=(Dx_0)\circledast (Ey_0)=z$. Therefore, BD does not yield a unique solution if $\Omega_\calX,\Omega_\calY$ contain such scaled versions of $x_0,y_0$. 
Any valid definition of unique recovery in BD must address this issue. If every solution $(x,y)$ is a scaled version of $(x_0,y_0)$, then we must say $(x_0,y_0)$ can be uniquely identified up to scaling. We define identifiability as follows.

\begin{definition}\label{def:ibip}
For the constrained BD problem, the solution $(x_0,y_0)$, in which $x_0\neq 0$ and $y_0\neq 0$, is said to be identifiable up to scaling, if every solution $(x,y)\in \Omega_\calX\times \Omega_\calY$ satisfies $x=\sigma x_0$ and $y=\frac{1}{\sigma} y_0$. 
\end{definition}

For blind deconvolution, there exists a linear operator $\mathcal{G}_{DE}:\bbC^{m_1\times m_2}\rightarrow \bbC^n$ such that $\mathcal{G}_{DE}(xy^T)=(Dx)\circledast(Ey)$. Given the measurement $z=\mathcal{G}_{DE}(x_0y_0^T)=(Dx_0)\circledast(Ey_0)$, one can recast the BD problem as the recovery of the rank-$1$ matrix $M_0=x_0y_0^T \in \Omega_\calM=\{xy^T:x\in \Omega_\calX,y\in \Omega_\calY\}$. The uniqueness of $M_0$ is equivalent to the identifiability of $(x_0,y_0)$ up to scaling. This procedure is called ``lifting''.
\begin{align*}
\text{(Lifted BD)}\qquad\text{find}~~& M,\\
\text{s.t.}~~&\mathcal{G}_{DE}(M) = z,\\
& M\in \Omega_\calM.
\end{align*}
It was shown \cite{Choudhary2013a} that the lifted BD has a unique solution for every $M_0\in \Omega_\calM$ if the nullspace of $\mathcal{G}_{DE}$ does not contain the difference of two different matrices in $\Omega_\calM$. 
\begin{proposition}\label{pro:ilift}
The pair $(x_0,y_0)\in\Omega_\calX\times\Omega_\calY$ ($x_0\neq 0$, $y_0\neq 0$) is identifiable up to scaling in (BD), or equivalently, the solution $M_0=x_0y_0^T\in \Omega_\calM$ is unique in (Lifted BD), if and only if
\[
\calN(\mathcal{G}_{DE})\bigcap \{M_0-M:M\in \Omega_\calM\} = \{0\}.
\]
\end{proposition}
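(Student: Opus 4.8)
The plan is to prove the stated equivalence in two stages: first relate identifiability up to scaling of $(x_0,y_0)$ to uniqueness of the lifted matrix $M_0$, then recast that uniqueness as the nullspace condition using the linearity of $\mathcal{G}_{DE}$.

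For the first stage, I would show that the feasible sets of (BD) and (Lifted BD) correspond under the lifting map $(x,y)\mapsto xy^T$. A pair $(x,y)\in\Omega_\calX\times\Omega_\calY$ solves (BD) precisely when $M=xy^T\in\Omega_\calM$ satisfies $\mathcal{G}_{DE}(M)=z$, i.e. when it solves (Lifted BD). The only substantive point is that distinct rank-one factorizations of the same matrix differ only by scaling: if $xy^T=x_0y_0^T$ with all four vectors nonzero, then comparing column spaces gives $x=\sigma x_0$ and comparing row spaces gives $y=\beta y_0$, and substituting back forces $\sigma\beta=1$, so $y=\frac{1}{\sigma}y_0$. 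This shows that $M_0$ is the unique solution of (Lifted BD) if and only if every solution of (BD) is a scaled copy of $(x_0,y_0)$, i.e. $(x_0,y_0)$ is identifiable up to scaling in the sense of Definition~\ref{def:ibip}.

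For the second stage, I would fix $M_0\in\Omega_\calM$ and observe that, by linearity, an $M\in\Omega_\calM$ satisfies $\mathcal{G}_{DE}(M)=z=\mathcal{G}_{DE}(M_0)$ if and only if $\mathcal{G}_{DE}(M_0-M)=0$, that is, $M_0-M\in\calN(\mathcal{G}_{DE})$. Hence the set of feasible matrices is exactly $\{M\in\Omega_\calM: M_0-M\in\calN(\mathcal{G}_{DE})\}$. Uniqueness of $M_0$ means this set reduces to $\{M_0\}$, equivalently that the only difference $M_0-M$ with $M\in\Omega_\calM$ lying in $\calN(\mathcal{G}_{DE})$ is $M_0-M_0=0$. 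Since $M_0\in\Omega_\calM$ always contributes $0$ to the difference set and $0\in\calN(\mathcal{G}_{DE})$ trivially, this is precisely the assertion $\calN(\mathcal{G}_{DE})\cap\{M_0-M:M\in\Omega_\calM\}=\{0\}$.

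I do not expect a serious obstacle, as the argument is a chain of equivalences obtained by unwinding the definitions together with the linearity of $\mathcal{G}_{DE}$. The one place demanding care is the rank-one factorization step in the first stage: one must use the hypotheses $x_0\neq 0$ and $y_0\neq 0$ so that the column and row spaces of $M_0$ are genuinely one-dimensional and the scaling relation $x=\sigma x_0$, $y=\frac{1}{\sigma}y_0$ is forced rather than merely permitted. I would also note explicitly that the right-hand side is $\{0\}$ and not $\emptyset$, because $M_0$ itself always produces the zero element of the difference set.
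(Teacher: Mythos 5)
Your proof is correct. Note that the paper itself gives no proof of Proposition~\ref{pro:ilift}: it states the result and attributes it to Choudhary and Mitra \cite{Choudhary2013a}, so there is no in-paper argument to compare against. Your two-stage derivation --- (i) the lifting correspondence, with the rank-one factorization argument showing $xy^T=x_0y_0^T\neq 0$ forces $x=\sigma x_0$, $y=\frac{1}{\sigma}y_0$, and (ii) the linearity observation that feasibility of $M$ is equivalent to $M_0-M\in\calN(\mathcal{G}_{DE})$ --- is the standard and complete way to establish it, and you correctly flag the two delicate points (nonzeroness of $x_0,y_0$ making the scaling forced, and $0$ always belonging to the intersection so that the condition reads $=\{0\}$ rather than $=\emptyset$).
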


Proposition \ref{pro:ilift} is difficult to apply because it is not clear how to find the nullspace of the structured linear operator $\mathcal{G}_{DE}$. To overcome this limitation, in our previous work (see \cite{Li2015} Theorem 2.8), we derived a necessary and sufficient condition for the identifiability in a bilinear inverse problem up to a transformation group. As a special case, we have the following necessary and sufficient condition for the identifiability in BD up to scaling, which holds for any $\Omega_\calX$ and $\Omega_\calY$.
\begin{proposition}\label{pro:ibd}
The pair $(x_0,y_0)\in\Omega_\calX\times\Omega_\calY$ ($x_0\neq 0$, $y_0\neq 0$) is identifiable up to scaling in (BD) if and only if the following two conditions are met:
\begin{enumerate}
	\item If there exists $(x,y)\in\Omega_\calX\times \Omega_\calY$ such that $(Dx)\circledast (Ey)=(Dx_0)\circledast (Ey_0)$, then $x = \sigma x_0$ for some nonzero $\sigma\in\bbC$.
	\item If there exists $y\in\Omega_\calY$ such that $(Dx_0)\circledast (Ey)=(Dx_0)\circledast (Ey_0)$, then $y=y_0$.
\end{enumerate}
\end{proposition}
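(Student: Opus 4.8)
The plan is to establish the two directions of the equivalence, working directly from Definition~\ref{def:ibip} and exploiting the bilinearity of the map $(x,y)\mapsto (Dx)\circledast(Ey)$. Throughout, I call a pair $(x,y)\in\Omega_\calX\times\Omega_\calY$ a \emph{solution} if $(Dx)\circledast(Ey)=z$, where $z\eqdef(Dx_0)\circledast(Ey_0)$. The whole argument rests on the single algebraic identity $(D(\sigma x_0))\circledast(Ey)=\sigma\,(Dx_0)\circledast(Ey)=(Dx_0)\circledast(E(\sigma y))$, which lets me transfer a nonzero scalar freely between the two factors of the convolution.

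For the forward direction, I would assume $(x_0,y_0)$ is identifiable up to scaling and derive both conditions. Condition~1 is immediate: any $(x,y)$ satisfying the convolution equation is a solution, so by Definition~\ref{def:ibip} it obeys $x=\sigma x_0$ for some nonzero $\sigma$. For Condition~2, suppose $y\in\Omega_\calY$ satisfies $(Dx_0)\circledast(Ey)=z$. Then $(x_0,y)$ is itself a solution, so identifiability forces $x_0=\sigma x_0$; since $x_0\neq 0$ this gives $\sigma=1$, whence $y=\tfrac{1}{\sigma}y_0=y_0$.

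For the reverse direction, I would assume Conditions~1 and~2 and take an arbitrary solution $(x,y)$. Condition~1 yields a nonzero $\sigma$ with $x=\sigma x_0$. Substituting into the convolution equation and using the identity above, $(Dx_0)\circledast(E(\sigma y))=(D(\sigma x_0))\circledast(Ey)=z=(Dx_0)\circledast(Ey_0)$. Applying Condition~2 with $\sigma y$ in the role of the free filter variable then gives $\sigma y=y_0$, i.e.\ $y=\tfrac{1}{\sigma}y_0$. Together with $x=\sigma x_0$ this is exactly the conclusion of Definition~\ref{def:ibip}, so $(x_0,y_0)$ is identifiable up to scaling.

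The step I expect to be the main obstacle is the invocation of Condition~2 in the reverse direction: it is stated only for members of $\Omega_\calY$, so to apply it to $\sigma y$ I must know $\sigma y\in\Omega_\calY$. This is where the scaling structure enters—the relevant constraint sets (the whole space $\bbC^{m_2}$ under a subspace constraint, and the sparse cone $\{y:\|y\|_0\le s_2\}$ under a sparsity constraint) are invariant under multiplication by a nonzero scalar, so $\sigma y\in\Omega_\calY$ holds automatically. I would therefore either assume $\Omega_\calY$ is scaling-invariant at the outset or, equivalently, note that this Proposition is the specialization to the scaling group of the transformation-group characterization of \cite{Li2015} (Theorem~2.8), whose hypotheses already require the constraint sets to be invariant under the group. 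The bookkeeping of the nonzero scalar $\sigma$ and the repeated use of $x_0\neq0$ and $y_0\neq0$ to rule out degenerate scalings are the only other points needing care.
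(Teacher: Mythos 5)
Your proof is correct, and it takes a genuinely different route from the paper: the paper never proves Proposition~\ref{pro:ibd} internally, but imports it as a special case of the transformation-group characterization in \cite{Li2015} (Theorem 2.8) specialized to the scaling group, whereas you give a self-contained elementary argument. Both of your directions are sound: necessity follows by instantiating Definition~\ref{def:ibip} (with $x_0\neq 0$ forcing $\sigma=1$ when the first coordinate is pinned at $x_0$), and sufficiency follows by moving the scalar across the bilinear map via $(D(\sigma x_0))\circledast(Ey)=(Dx_0)\circledast(E(\sigma y))$ and then applying Condition~2 at $\sigma y$. You are also right that the only delicate step is whether $\sigma y\in\Omega_\calY$, and this is not pedantry: the paper's surrounding claim that the proposition ``holds for any $\Omega_\calX$ and $\Omega_\calY$'' is, read literally, too strong. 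Indeed, working in the DFT domain with $n=2$, $m_1=1$, $m_2=2$, $\widetilde{D}=(1,0)^T$, $\widetilde{E}=I_2$, $x_0=1$, $y_0=(1,1)^T$, $\Omega_\calX=\bbC$, and the non-cone set $\Omega_\calY=\{(1,1)^T,(2,4)^T\}$, the convolution $(Dx)\circledast(Ey)$ has DFT proportional to $(xy^{(1)},0)^T$, so Conditions~1 and~2 both hold, yet $(\tfrac{1}{2},(2,4)^T)$ is a solution that is not a scaled version of $(x_0,y_0)$ since $(2,4)^T\neq 2\,(1,1)^T$; hence the reverse implication genuinely requires $\Omega_\calY$ to be closed under nonzero scaling. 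Since every constraint set actually used in the paper ($\bbC^{m_2}$ or a sparsity cone) has this closure property, and the cited theorem in \cite{Li2015} assumes invariance of the constraint sets under the group, your proof covers everything the paper needs. What the paper's citation route buys is generality across transformation groups and other bilinear inverse problems; what your route buys is transparency and, as a useful byproduct, it makes explicit the cone hypothesis that the paper's phrasing obscures.
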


Propositions \ref{pro:ilift} and \ref{pro:ibd} are two equivalent conditions for the identifiability in blind deconvolution. Proposition \ref{pro:ilift} shows how the identifiability of $(x,y)$ is connected to that of the lifted variable $xy^T$. Proposition \ref{pro:ibd} shows how the identifiability of $(x,y)$ can be divided into the identifiability of $x$ and $y$ individually.
In this paper, we derive more readily interpretable conditions for the uniqueness of solution to BD with subspace or sparsity constraints. We first derive a sufficient condition for the case where the bases or frames are generic, using the lifting framework. We also apply \ref{pro:ibd} and derive a sufficient condition for the case where one of the bases has a sub-band structure.

\section{Blind Deconvolution with Generic Bases or Frames} \label{sec:gebf}
Subspace membership and sparsity have been used as priors in blind deconvolution for a long time. 
Previous works either use these priors without theoretical justification \cite{Chan1998,Herrity2008b,Asif2009,Krishnan2011,Repetti2015}, or impose probablistic models and show successful recovery with high probability \cite{Ahmed2014,Ling2015,Lee2015}. In this section, we derive sufficient conditions for the identifiability of blind deconvolution under subspace or sparsity constraints. These conditions are fully deterministic and provide uniform upper bounds for the sample complexities of blind deconvolution with almost all bases or frames. 

The identifiability of $(x_0,y_0)$ up to scaling in (BD) is equivalent to the uniqueness of $M_0=x_0y_0^T$ in (Lifted BD). The linear operator $\mathcal{G}_{DE}$ can also be represented by a matrix $G_{DE}\in\bbC^{n\times m_1m_2}$ such that $\mathcal{G}_{DE}(M_0)=G_{DE}\operatorname{vec}(M_0)$, where $\operatorname{vec}(M_0)$ stacks the columns of $M_0\in\bbC^{m_1\times m_2}$ on top of one another and forms a vector in $\bbC^{m_1m_2}$. The columns of $G_{DE}$ have the form $D^{(:,j)}\circledast E^{(:,k)}=\sqrt{n}F_n^*(\widetilde{D}^{(:,j)}\odot \widetilde{E}^{(:,k)})$, where $\widetilde{D}=F_nD$ and $\widetilde{E}=F_nE$. Clearly, the matrix $G_{DE}$ is a function of the matrices $D$ and $E$. It has the following properties (See Appendix \ref{app:fcr} for the proofs).

\begin{lemma}\label{lem:fcr1}
If $n\geq m_1m_2$, then for almost all $D\in\bbC^{n\times m_1}$ and $E\in\bbC^{n\times m_2}$, the matrix $G_{DE}$ has full column rank.
\end{lemma}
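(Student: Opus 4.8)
The plan is to pass to the Fourier domain, reduce the claim to the generic non-vanishing of a single polynomial, and then exhibit one explicit witness matrix of full column rank. First I would use the stated factorization of the columns of $G_{DE}$. Writing $H\in\bbC^{n\times m_1m_2}$ for the matrix whose column indexed by $(j,k)$ is $\widetilde{D}^{(:,j)}\odot\widetilde{E}^{(:,k)}$, the identity $D^{(:,j)}\circledast E^{(:,k)}=\sqrt{n}F_n^*(\widetilde{D}^{(:,j)}\odot\widetilde{E}^{(:,k)})$ gives $G_{DE}=\sqrt{n}F_n^*H$. Since $\sqrt{n}F_n^*$ is an invertible $n\times n$ matrix, $\rank(G_{DE})=\rank(H)$, so it suffices to prove that $H$ has full column rank for almost all $D,E$. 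Moreover, the map $D\mapsto\widetilde{D}=F_nD$ is a linear isomorphism, hence sends measure-zero sets to measure-zero sets and conversely; I may therefore treat $\widetilde{D}$ and $\widetilde{E}$ directly as the free generic variables.

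Next I would set up the standard algebraic-genericity argument. Each $m_1m_2\times m_1m_2$ minor of $H$ is a polynomial in the entries of $\widetilde{D}$ and $\widetilde{E}$, and $H$ fails to have full column rank precisely when all such minors vanish simultaneously. The zero set of any one polynomial that is not identically zero is a proper algebraic subvariety, and hence has Lebesgue measure zero. Thus it is enough to exhibit a single choice of $\widetilde{D},\widetilde{E}$ for which some $m_1m_2\times m_1m_2$ minor of $H$ is nonzero: the set of $(\widetilde{D},\widetilde{E})$ for which $H$ drops rank is then contained in the measure-zero zero set of that particular minor, and its complement is the desired generic set.

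The core step, which I expect to be the main obstacle, is constructing such a witness. Using $n\geq m_1m_2$, I would select $m_1m_2$ rows of $H$ and index them by pairs $(p,q)$ with $p\in\{1,\dots,m_1\}$ and $q\in\{1,\dots,m_2\}$. Fix invertible matrices $A\in\bbC^{m_1\times m_1}$ and $B\in\bbC^{m_2\times m_2}$ (for instance, Vandermonde matrices built from distinct nodes). On the selected rows I set the entry of $\widetilde{D}^{(:,j)}$ in row $(p,q)$ equal to $A_{pj}$, depending only on $p$, and the entry of $\widetilde{E}^{(:,k)}$ in row $(p,q)$ equal to $B_{qk}$, depending only on $q$; the remaining $n-m_1m_2$ rows of $\widetilde{D}$ and $\widetilde{E}$ may be chosen arbitrarily. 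Then the corresponding entry of $H$ is $A_{pj}B_{qk}$, so the square submatrix of $H$ formed by these rows is exactly the Kronecker product $A\otimes B$ under the pairing of row index $(p,q)$ with column index $(j,k)$. Its determinant is $(\det A)^{m_2}(\det B)^{m_1}\neq 0$, so this minor is a nonzero polynomial, and the genericity argument of the previous paragraph completes the proof.
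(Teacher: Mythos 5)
Your proof is correct and follows essentially the same route as the paper's: both pass to the Fourier domain and reduce the claim to exhibiting a single witness $(\widetilde{D},\widetilde{E})$ for which an $m_1m_2\times m_1m_2$ minor is nonzero, the paper delegating the genericity step to a cited lemma of Harikumar and Bresler while you prove it inline via the standard zero-set-of-a-nonzero-polynomial argument. The only substantive difference is the witness itself: the paper takes $\widetilde{D}$ with all submatrices of full rank and $\widetilde{E}^{(1:m_1m_2,:)}=I_{m_2}\otimes \mathbf{1}_{m_1,1}$, making the top $m_1m_2\times m_1m_2$ block of $\widetilde{G}_{DE}$ block-diagonal with full-rank blocks, whereas your rank-one row structure yields the Kronecker product $A\otimes B$ with determinant $(\det A)^{m_2}(\det B)^{m_1}\neq 0$; both witnesses are equally valid.
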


\begin{lemma}\label{lem:fcr2}
If $n\geq 2s_1m_2$, then for any $0\leq t_1\leq s_1$, for almost all $D_0\in\bbC^{n\times t_1}$, $D_1\in\bbC^{n\times (s_1-t_1)}$, $D_2\in\bbC^{n\times (s_1-t_1)}$, and $E\in\bbC^{n\times m_2}$, the columns of $G_{D_0E}$, $G_{D_1E}$, $G_{D_2E}$ together form a linearly independent set.
\end{lemma}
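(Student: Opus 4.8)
The plan is to reduce Lemma \ref{lem:fcr2} directly to the full-column-rank result of Lemma \ref{lem:fcr1} by concatenating the three dictionaries into a single one. I would stack the matrices and set
\[
\widehat{D} \eqdef [D_0, D_1, D_2] \in \bbC^{n\times (2s_1-t_1)},
\]
whose columns are exactly the union of the columns of $D_0$, $D_1$, and $D_2$. Since each column of $G_{DE}$ has the form $D^{(:,j)}\circledast E^{(:,k)}$, the columns of $G_{\widehat{D}E}$ are precisely the union of the columns of $G_{D_0E}$, $G_{D_1E}$, and $G_{D_2E}$. Hence the assertion that these columns together form a linearly independent set is equivalent to the assertion that $G_{\widehat{D}E}$ has full column rank.

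Next, I would invoke Lemma \ref{lem:fcr1} with $m_1$ replaced by $2s_1-t_1$. The dimension condition it requires is $n\geq (2s_1-t_1)m_2$, which is implied by the hypothesis $n\geq 2s_1m_2$ because $t_1\geq 0$. Lemma \ref{lem:fcr1} then guarantees that $G_{\widehat{D}E}$ has full column rank for almost all $\widehat{D}\in\bbC^{n\times(2s_1-t_1)}$ and $E\in\bbC^{n\times m_2}$.

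The only point that needs care is transferring this genericity back to the separate matrices $D_0$, $D_1$, $D_2$, $E$. Under the natural identification $\bbC^{n\times t_1}\times\bbC^{n\times(s_1-t_1)}\times\bbC^{n\times(s_1-t_1)}\cong\bbC^{n\times(2s_1-t_1)}$ given by concatenation, the product Lebesgue measure coincides with the Lebesgue measure on the combined space, since the map is just a reindexing of coordinates. Thus the exceptional set on which $G_{\widehat{D}E}$ fails to have full column rank, being of measure zero in $\widehat{D}$, pulls back to a measure-zero set in $(D_0,D_1,D_2)$ by Fubini's theorem. This proves the claim for each fixed $t_1$, and there are only finitely many integers $t_1$ with $0\leq t_1\leq s_1$ to consider. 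I do not expect a substantive obstacle here: the reduction is essentially a bookkeeping argument, and the entire analytic content is already carried by Lemma \ref{lem:fcr1}.
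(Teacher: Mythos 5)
Your proposal is correct and is essentially identical to the paper's own proof: concatenate $D=[D_0,D_1,D_2]$, observe that the columns of $G_{DE}$ are a permutation of the columns of $G_{D_0E}$, $G_{D_1E}$, $G_{D_2E}$, and apply Lemma~\ref{lem:fcr1} with $m_1=2s_1-t_1\leq 2s_1$ so that $n\geq 2s_1m_2\geq m_1m_2$. The only difference is that you spell out the measure-preserving identification of the product space with the space of concatenated matrices, which the paper leaves implicit.
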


\begin{lemma}\label{lem:fcr3}
If $n\geq 2s_1s_2$, then for any $0\leq t_1\leq s_1$, $0\leq t_2\leq s_2$, for almost all $D_0\in\bbC^{n\times t_1}$, $D_1\in\bbC^{n\times (s_1-t_1)}$, $D_2\in\bbC^{n\times (s_1-t_1)}$, $E_0\in\bbC^{n\times t_2}$, $E_1\in\bbC^{n\times (s_2-t_2)}$, and $E_2\in\bbC^{n\times (s_2-t_2)}$, the columns of $G_{D_0E_0}$, $G_{D_1E_0}$, $G_{D_2E_0}$, $G_{D_0E_1}$, $G_{D_1E_1}$, $G_{D_0E_2}$, $G_{D_2E_2}$ together form a linearly independent set.
\end{lemma}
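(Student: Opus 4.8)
The plan is to prove the statement by the standard ``generic rank'' argument: realize the relevant columns as the values of a holomorphic polynomial map in the entries of the six matrices, reduce full linear independence to the nonvanishing of a single maximal minor, and then exhibit one explicit choice of matrices at which that minor is nonzero. Since a holomorphic polynomial that is not identically zero vanishes only on a set of Lebesgue measure zero, this establishes the linear independence for almost all $D_0,D_1,D_2,E_0,E_1,E_2$ for the fixed pair $(t_1,t_2)$ (and there are only finitely many such pairs).

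First I would pass to the Fourier domain. Because $\sqrt{n}F_n^*$ is an invertible (unitary) map, the columns $D_i^{(:,p)}\circledast E_j^{(:,q)}=\sqrt{n}F_n^*(\widetilde{D_i}^{(:,p)}\odot\widetilde{E_j}^{(:,q)})$ are linearly independent if and only if the Hadamard products $\widetilde{D_i}^{(:,p)}\odot\widetilde{E_j}^{(:,q)}$ are, and since $F_n$ is a linear bijection, $D_i$ is generic iff $\widetilde{D_i}=F_nD_i$ is generic. So it suffices to work with generic $\widetilde{D_i},\widetilde{E_j}$ and the Hadamard-product vectors. Collecting the vectors from the seven blocks into an $n\times N$ matrix $H$, a direct count of the block sizes gives $N=t_1t_2+2(s_1-t_1)t_2+2t_1(s_2-t_2)+2(s_1-t_1)(s_2-t_2)=2s_1s_2-t_1t_2\leq 2s_1s_2\leq n$, so $N\leq n$, which is necessary for linear independence to be possible at all. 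Every entry of $H$ is bilinear, hence a holomorphic polynomial, in the entries of the $\widetilde{D_i}$ and $\widetilde{E_j}$; consequently each $N\times N$ minor of $H$ is a holomorphic polynomial in these entries, and it suffices to show that at least one such minor is not the zero polynomial.

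To produce the witness I would use Vandermonde columns. Choose each column of $\widetilde{D_i}$ and of $\widetilde{E_j}$ to be a vector of the form $(1,\lambda,\lambda^2,\dots,\lambda^{n-1})^T$, assigning a distinct base value to each column: values $\alpha^{(i)}_p$ to the columns of $\widetilde{D_i}$ and $\gamma^{(j)}_q$ to the columns of $\widetilde{E_j}$. The key algebraic fact is that the Hadamard product of two such vectors with bases $\alpha$ and $\gamma$ is again of this form with base $\alpha\gamma$. Hence each column of $H$ becomes a Vandermonde vector whose base is the product $\alpha^{(i)}_p\gamma^{(j)}_q$ of the bases of its two factors. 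If all $N$ of these products are distinct, then (since $N\leq n$) the $N$ Vandermonde columns are linearly independent, because any $N$ rows of $H$ form a square Vandermonde matrix with distinct nodes, which is invertible; this exhibits a point at which the corresponding minor is nonzero.

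The main obstacle, and the only place requiring care, is guaranteeing that the $N$ products $\alpha^{(i)}_p\gamma^{(j)}_q$ are distinct, in spite of the sharing of columns across blocks (the columns of $D_0$ enter three blocks and those of $E_0$ enter three blocks). I would argue that for each pair of distinct index-tuples the coincidence equation $\alpha^{(i)}_p\gamma^{(j)}_q=\alpha^{(i')}_{p'}\gamma^{(j')}_{q'}$ cuts out a proper algebraic hypersurface in the space of base values, so the finite union of these coincidence sets has measure zero; any base assignment off this union (such a choice exists since we need only one) yields $N$ distinct products. With such a choice the witness minor is nonzero, hence it is a nonzero polynomial, its zero set has measure zero, and therefore the columns of $G_{D_0E_0},G_{D_1E_0},G_{D_2E_0},G_{D_0E_1},G_{D_1E_1},G_{D_0E_2},G_{D_2E_2}$ are linearly independent for almost all $D_0,D_1,D_2,E_0,E_1,E_2$, completing the argument.
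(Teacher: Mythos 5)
Your proof is correct, and while it shares the paper's high-level skeleton---reduce linear independence to the non-vanishing of a polynomial (a maximal minor) in the entries of the matrices, exhibit one witness where it is nonzero, and invoke the fact that a nonzero polynomial vanishes on a measure-zero set---the witness itself is genuinely different. The paper groups the seven blocks as $[G_{DE_0},G_{D'E_1},G_{D''E_2}]$ with $D=[D_0,D_1,D_2]$, $D'=[D_0,D_1]$, $D''=[D_0,D_2]$, and builds the witness by choosing the $\widetilde{E}_j$ to have disjoint row supports in the Fourier domain (blocks of the form $I\otimes\mathbf{1}$ stacked over zero blocks) and the $\widetilde{D}_i$ fully generic; full column rank then follows block-by-block from the constructions in Lemmas \ref{lem:fcr1} and \ref{lem:fcr2}, so the argument is modular and recursive. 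Your witness instead takes every column of every $\widetilde{D}_i,\widetilde{E}_j$ to be a geometric (Vandermonde) vector, exploits that Hadamard products of such vectors are again geometric with multiplied nodes, and handles the crux of the lemma---the sharing of $D_0$ and $E_0$ columns across several blocks---by observing that each coincidence of node products is a proper hypersurface in the base parameters, so distinct products can be arranged. This is self-contained (it does not lean on the two earlier lemmas), and in fact the same construction proves Lemmas \ref{lem:fcr1}, \ref{lem:fcr2} and \ref{lem:fcr3} uniformly, which is a real advantage; the paper's construction, by contrast, makes the block structure explicit and reuses prior work. One small repair: your claim that \emph{any} $N$ rows of $H$ form an invertible Vandermonde matrix is false over $\bbC$ (a generalized Vandermonde matrix with distinct complex nodes can be singular, e.g.\ nodes $1,-1$ with row powers $0,2$); you should take the \emph{first} $N$ rows, i.e.\ powers $0,\dots,N-1$, which do form a genuine Vandermonde matrix with distinct nodes. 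Since one nonzero maximal minor is all you need, this does not affect the validity of the argument.
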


Next, we state and prove sufficient conditions for identifiability of blind deconvolution with generic bases or frames.

\begin{theorem}[Subspace Constraints]\label{thm:gebf1}
In (BD) with subspace constraints, $(x_0,y_0)\in\bbC^{m_1}\times \bbC^{m_2}$ ($x_0\neq 0$, $y_0\neq 0$) is identifiable up to scaling, for almost all $D\in\bbC^{n\times m_1}$ and $E\in\bbC^{n\times m_2}$, if $n\geq m_1m_2$.
\end{theorem}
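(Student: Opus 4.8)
The plan is to invoke the lifting characterization in Proposition \ref{pro:ilift} and reduce the entire statement to the full column rank property already isolated in Lemma \ref{lem:fcr1}. Under subspace constraints we have $\Omega_\calX=\bbC^{m_1}$ and $\Omega_\calY=\bbC^{m_2}$, so the lifted constraint set is $\Omega_\calM=\{xy^T:x\in\bbC^{m_1},\,y\in\bbC^{m_2}\}$, the set of all matrices in $\bbC^{m_1\times m_2}$ of rank at most one. Accordingly, $\{M_0-M:M\in\Omega_\calM\}$ consists of differences of rank-$\le 1$ matrices, and Proposition \ref{pro:ilift} tells me that identifiability of $(x_0,y_0)$ up to scaling is equivalent to $\calN(\mathcal{G}_{DE})\cap\{M_0-M:M\in\Omega_\calM\}=\{0\}$.

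The key simplification I would exploit is that $\mathcal{G}_{DE}$ acts on $\operatorname{vec}(M)$ through the matrix $G_{DE}\in\bbC^{n\times m_1m_2}$, so its nullspace is trivial precisely when $G_{DE}$ has full column rank. First I would apply Lemma \ref{lem:fcr1}: when $n\geq m_1m_2$, for almost all $D$ and $E$ the matrix $G_{DE}$ has full column rank, hence $\calN(\mathcal{G}_{DE})=\{0\}$. With a trivial nullspace the intersection in Proposition \ref{pro:ilift} collapses to $\{0\}$ regardless of which rank-$\le 1$ difference is considered, so identifiability follows. I would also emphasize that this argument is uniform in $(x_0,y_0)$: full column rank is a property of $G_{DE}$ alone, so outside a single measure-zero set of $(D,E)$, every nonzero pair $(x_0,y_0)$ is simultaneously identifiable.

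Given Lemma \ref{lem:fcr1}, the theorem is essentially a one-line corollary, so the real obstacle lives in that lemma rather than in the theorem proper. To establish it I would argue by genericity: the Fourier-domain columns $\widetilde{D}^{(:,j)}\odot\widetilde{E}^{(:,k)}$ are polynomial in the entries of $D$ and $E$, so every $m_1m_2\times m_1m_2$ minor of $G_{DE}$ is a polynomial in those entries, and it suffices to exhibit a single $(D,E)$ at which one such minor is nonzero, since a polynomial that is not identically zero vanishes only on a set of measure zero. The cleanest way I expect to produce such a witness is to choose $\widetilde{D}$ and $\widetilde{E}$ with Vandermonde-type frequency profiles, say $\widetilde{D}^{(f,j)}=\alpha_j^f$ and $\widetilde{E}^{(f,k)}=\beta_k^f$, so that column $(j,k)$ becomes $(\alpha_j\beta_k)^f$; if the $m_1m_2$ products $\alpha_j\beta_k$ are pairwise distinct, then the first $m_1m_2$ frequencies realize a nonsingular Vandermonde minor. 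The delicate point is arranging the labels to stay distinct and confirming the resulting determinant is genuinely nonzero; verifying this nondegeneracy is where the technical care is needed, and it is exactly what the appendix proof of Lemma \ref{lem:fcr1} must supply.
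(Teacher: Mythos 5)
Your proof of the theorem itself is correct and follows the paper exactly: invoke Lemma \ref{lem:fcr1} to get full column rank of $G_{DE}$ for almost all $(D,E)$ when $n\geq m_1m_2$, conclude $\calN(\mathcal{G}_{DE})=\{0\}$, and then identifiability of every nonzero pair $(x_0,y_0)$ follows from Proposition \ref{pro:ilift}, uniformly in $(x_0,y_0)$ since full rank is a property of $(D,E)$ alone. Where you genuinely diverge from the paper is in your sketch of Lemma \ref{lem:fcr1}: you propose a Vandermonde witness, taking $\widetilde{D}^{(f,j)}=\alpha_j^f$ and $\widetilde{E}^{(f,k)}=\beta_k^f$ so that the first $m_1m_2$ rows of $\widetilde{G}_{DE}$ form a Vandermonde matrix with nodes $\alpha_j\beta_k$, whereas the paper chooses $\widetilde{E}^{(1:m_1m_2,:)}=I_{m_2}\otimes \mathbf{1}_{m_1,1}$ so that the corresponding block of $\widetilde{G}_{DE}$ is block-diagonal with blocks $\widetilde{D}^{(m_1(i-1)+1:m_1 i,:)}$, each of full rank for a generic $\widetilde{D}$; the paper also routes the genericity step through Lemma 1 of Harikumar--Bresler rather than the equivalent nonvanishing-minor argument you give. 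Your route is valid, and the ``delicate point'' you flag is easily dispatched: choosing, say, $\alpha_j=2^{j}$ and $\beta_k=2^{m_1k}$ makes the products $\alpha_j\beta_k=2^{j+m_1k}$ pairwise distinct, so the Vandermonde determinant is nonzero. The trade-off is that your witness is more self-contained and elementary (a single explicit determinant), while the paper's block-diagonal construction is designed for reuse: the same Kronecker trick extends directly to the disjoint-row-block argument in Lemmas \ref{lem:fcr2} and \ref{lem:fcr3}, which handle the sparse and mixed cases where a plain Vandermonde choice would not immediately give the required joint linear independence across several $G$-matrices.
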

\begin{proof}
By Lemma \ref{lem:fcr1}, if $n\geq m_1m_2$, for almost all $D\in\bbC^{n\times m_1}$ and $E\in\bbC^{n\times m_2}$, the matrix $G_{DE}$ has full column rank. Therefore, $\calN(\mathcal{G}_{DE})=\{0\}$, and the lifted problem has a unique solution. It follows that every pair $(x_0,y_0)$ is identifiable up to scaling.
\end{proof}

\begin{theorem}[Mixed Constraints]\label{thm:gebf2}
In (BD) with mixed constraints, $(x_0,y_0)\in\bbC^{m_1}\times \bbC^{m_2}$ ($\|x_0\|_0\leq s_1$, $x_0\neq 0$, $y_0\neq 0$) is identifiable up to scaling, for almost all $D\in\bbC^{n\times m_1}$ and $E\in\bbC^{n\times m_2}$, if $n\geq 2s_1m_2$.
\end{theorem}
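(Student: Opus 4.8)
The plan is to combine Proposition \ref{pro:ilift} with Lemma \ref{lem:fcr2}. By Proposition \ref{pro:ilift}, identifiability up to scaling of the fixed pair $(x_0,y_0)$ is equivalent to showing that, for almost all $D$ and $E$, no nonzero matrix of the form $M_0-M$ with $M=xy^T\in\Omega_\calM$ (that is, $\norm{x}_0\leq s_1$ and $y\in\bbC^{m_2}$) lies in $\calN(\mathcal{G}_{DE})$. So I would fix an arbitrary competitor $M=xy^T$, assume $\mathcal{G}_{DE}(M_0-M)=0$, and aim to conclude $M=M_0$; a rank-one factorization then yields $x=\sigma x_0$ and $y=\frac{1}{\sigma}y_0$, which is exactly the required scaling relation.

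The first step is to organize the supports. Let $J_0=\supp(x_0)$ and $J=\supp(x)$, each of size at most $s_1$, and set $t_1=|J_0\cap J|$. Since $M_0=x_0y_0^T$ and $M=xy^T$ have nonzero rows only on $J_0$ and $J$ respectively, the difference $M_0-M$ is supported on the three disjoint row-blocks $A_0=J_0\cap J$, $A_1=J_0\setminus J$, $A_2=J\setminus J_0$, with $|A_0|=t_1$, $|A_0|+|A_1|=|J_0|\leq s_1$, and $|A_0|+|A_2|=|J|\leq s_1$. Writing $\mathcal{G}_{DE}(M_0-M)$ block by block in the columns of $G_{DE}$, the rows indexed by $A_0$ contribute $x_0^{(A_0)}y_0^T-x^{(A_0)}y^T$ through the columns of $G_{D^{(:,A_0)}E}$, the rows indexed by $A_1$ contribute $x_0^{(A_1)}y_0^T$ through $G_{D^{(:,A_1)}E}$, and the rows indexed by $A_2$ contribute $-x^{(A_2)}y^T$ through $G_{D^{(:,A_2)}E}$.

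Next I would invoke Lemma \ref{lem:fcr2}. Because $A_0,A_1,A_2$ are disjoint and obey the cardinality bounds above, and since $n\geq 2s_1m_2$, the lemma guarantees that the columns of $G_{D^{(:,A_0)}E}$, $G_{D^{(:,A_1)}E}$, $G_{D^{(:,A_2)}E}$ are jointly linearly independent, so each coefficient block must vanish: $x_0^{(A_1)}y_0^T=0$, $x^{(A_2)}y^T=0$, and $x_0^{(A_0)}y_0^T=x^{(A_0)}y^T$. As $y_0\neq 0$ and the entries of $x_0$ on $A_1\subseteq J_0$ are nonzero, the first identity forces $A_1=\emptyset$, i.e. $J_0\subseteq J$. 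Symmetrically—after ruling out $y=0$, which would make $M=0$ and, by the same independence with $t_1=0$, contradict $\mathcal{G}_{DE}(M_0)\neq 0$—the second identity forces $A_2=\emptyset$, so $J=J_0$. With $A_0=J_0=J$ the third identity reduces to $x_0y_0^T=xy^T$, whose rank-one factorization gives the desired $x=\sigma x_0$, $y=\frac{1}{\sigma}y_0$.

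The step I expect to be the main obstacle is ensuring that a single measure-zero exceptional set of $(D,E)$ works uniformly over all competitors. Lemma \ref{lem:fcr2} is phrased for generic matrices of fixed sizes, whereas here $D^{(:,A_0)},D^{(:,A_1)},D^{(:,A_2)}$ are disjoint column-submatrices of one matrix $D$, and the index sets depend on the unknown competitor support $J$. I would resolve this by observing that there are only finitely many triples of disjoint index sets $(A_0,A_1,A_2)$ with the required cardinalities; a single generic $D$ induces generic disjoint submatrices, and blocks smaller than the stated sizes $t_1,s_1-t_1,s_1-t_1$ inherit linear independence from the full conclusion. Applying Lemma \ref{lem:fcr2} to each triple yields one measure-zero exceptional set, and the finite union is still measure zero. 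Hence for almost all $D$ and $E$ the required independence holds simultaneously for every possible competitor support, which completes the proof.
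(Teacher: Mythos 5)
Your proof is correct and follows essentially the same route as the paper's: lifting via Proposition \ref{pro:ilift}, decomposition over the support blocks $J_0\cap J$, $J_0\setminus J$, $J\setminus J_0$, linear independence of the corresponding columns via Lemma \ref{lem:fcr2}, and a finite union over the possible supports to get a single measure-zero exceptional set. The only differences are bookkeeping: the paper sidesteps your inheritance argument for undersized blocks (and your detour through $A_1=A_2=\emptyset$ and the $y=0$ case) by fixing $|J_0|=|J|=s_1$ upfront and concluding $x_0y_0^T=xy^T$ directly from the vanishing of all three coefficient blocks.
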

\begin{proof}
Fix index sets $J_0,J\subset\{1,2,\cdots,m_1\}$, for which $|J_0|=|J|=s_1$ and $|J_0\bigcap J|=t_1$. Let 
\[
D_0 = D^{(:,J_0\bigcap J)}\in\bbC^{n\times t_1}, \quad D_1 = D^{(:,J_0\setminus J)}\in\bbC^{n\times (s_1-t_1)}, \quad D_2 = D^{(:,J\setminus J_0)}\in\bbC^{n\times (s_1-t_1)}.
\]
By Lemma \ref{lem:fcr2}, if $n\geq 2s_1m_2$, then for almost all $D$ and $E$, the columns of $G_{D_0E}$, $G_{D_1E}$, $G_{D_2E}$ together form a linearly independent set. For every $(x_0,y_0)$ and $(x,y)$ such that the $s_1$-sparse $x_0$ and $x$ are supported on $J_0$ and $J$ respectively, if $\mathcal{G}_{DE}(x_0y_0^T)=\mathcal{G}_{DE}(xy^T)$, then
\[
G_{D_0E}v_0+G_{D_1E}v_1+G_{D_2E}v_2=G_{DE}v=\mathcal{G}_{DE}(x_0y_0^T)-\mathcal{G}_{DE}(xy^T)=0,
\]
where $v=\operatorname{vec}(x_0y_0^T-xy^T)$, $v_0 = \operatorname{vec}(x_0^{(J_0\bigcap J)}y_0^T-x^{(J_0\bigcap J)}y^T)$, $v_1 = \operatorname{vec}(x_0^{(J_0\setminus J)}y_0^T)$ and $v_2 = \operatorname{vec}(-x^{(J\setminus J_0)}y^T)$. By linear independence, the vectors $v_0,v_1,v_2$ are all zero vectors, and so is $v$. Hence for almost all $D$ and $E$, and all pairs $(x_0,y_0)$ and $(x,y)$ such that $x_0$ and $x$ are supported on $J_0$ and $J$ respectively, if $\mathcal{G}_{DE}(x_0y_0^T)=\mathcal{G}_{DE}(xy^T)$, then $x_0y_0^T=xy^T$. Note that $J_0$ and $J$ are arbitrary, and there is only a finite number (${m_1 \choose s_1}^2$) of combinations of $J_0,J$. Therefore, for almost all $D$ and $E$, every pair $(x_0,y_0)$ ($\|x_0\|_0\leq s_1$, $x_0\neq 0$, $y_0\neq 0$) is identifiable up to scaling.
\end{proof}

\begin{theorem}[Sparsity Constraints]\label{thm:gebf3}
In (BD) with sparsity constraints, $(x_0,y_0)\in\bbC^{m_1}\times \bbC^{m_2}$ ($\|x_0\|_0\leq s_1$, $\|y_0\|_0\leq s_2$, $x_0\neq 0$, $y_0\neq 0$) is identifiable up to scaling, for almost all $D\in\bbC^{n\times m_1}$ and $E\in\bbC^{n\times m_2}$, if $n\geq 2s_1s_2$.
\end{theorem}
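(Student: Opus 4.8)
The plan is to follow the template of the proof of Theorem~\ref{thm:gebf2}, but now to fix prospective supports on \emph{both} factors and invoke Lemma~\ref{lem:fcr3} in place of Lemma~\ref{lem:fcr2}. Concretely, I would fix index sets $J_0,J\subset\{1,\dots,m_1\}$ with $|J_0|=|J|=s_1$ and $|J_0\cap J|=t_1$, and $K_0,K\subset\{1,\dots,m_2\}$ with $|K_0|=|K|=s_2$ and $|K_0\cap K|=t_2$, as the candidate supports of $x_0,x$ and $y_0,y$ respectively. I would then partition the columns of $D$ into $D_0=D^{(:,J_0\cap J)}$, $D_1=D^{(:,J_0\setminus J)}$, $D_2=D^{(:,J\setminus J_0)}$, and the columns of $E$ into $E_0=E^{(:,K_0\cap K)}$, $E_1=E^{(:,K_0\setminus K)}$, $E_2=E^{(:,K\setminus K_0)}$, which have exactly the dimensions required by Lemma~\ref{lem:fcr3}. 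Hence for $n\geq 2s_1s_2$ and almost all $D,E$, the columns of $G_{D_0E_0},G_{D_1E_0},G_{D_2E_0},G_{D_0E_1},G_{D_1E_1},G_{D_0E_2},G_{D_2E_2}$ form a linearly independent set.

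The core step is to expand $\operatorname{vec}(x_0y_0^T-xy^T)$ along these seven blocks. Since $x_0y_0^T$ is supported on $J_0\times K_0=(D_0\cup D_1)\times(E_0\cup E_1)$ and $xy^T$ on $J\times K=(D_0\cup D_2)\times(E_0\cup E_2)$, the only block shared by the two outer products is $(D_0,E_0)$, while every other nonzero block belongs to exactly one of them. I would therefore set $v_{00}=\operatorname{vec}(x_0^{(J_0\cap J)}(y_0^{(K_0\cap K)})^T-x^{(J_0\cap J)}(y^{(K_0\cap K)})^T)$, $v_{10}=\operatorname{vec}(x_0^{(J_0\setminus J)}(y_0^{(K_0\cap K)})^T)$, $v_{01}=\operatorname{vec}(x_0^{(J_0\cap J)}(y_0^{(K_0\setminus K)})^T)$, $v_{11}=\operatorname{vec}(x_0^{(J_0\setminus J)}(y_0^{(K_0\setminus K)})^T)$, and analogously $v_{20},v_{02},v_{22}$ from the $-xy^T$ part. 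An assumed collision $\mathcal{G}_{DE}(x_0y_0^T)=\mathcal{G}_{DE}(xy^T)$ then reads
\[
G_{D_0E_0}v_{00}+G_{D_1E_0}v_{10}+G_{D_2E_0}v_{20}+G_{D_0E_1}v_{01}+G_{D_1E_1}v_{11}+G_{D_0E_2}v_{02}+G_{D_2E_2}v_{22}=0,
\]
so linear independence forces every $v_{ij}=0$, and therefore $x_0y_0^T=xy^T$.

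To close the argument, I would note that there are only finitely many choices of supports $(J_0,J,K_0,K)$ and of overlaps $(t_1,t_2)$, so the union of the corresponding measure-zero exceptional sets of $(D,E)$ supplied by Lemma~\ref{lem:fcr3} is again measure zero; on its complement every admissible $(x_0,y_0)$ with $\|x_0\|_0\leq s_1$, $\|y_0\|_0\leq s_2$ is identifiable up to scaling. The main point to get right is the block bookkeeping: one must verify that the two ``cross'' blocks $(D_1,E_2)$ and $(D_2,E_1)$ receive \emph{no} contribution---$(D_1,E_2)$ lies outside $J_0\times K_0$ because $E_2\cap K_0=\emptyset$ and outside $J\times K$ because $D_1\cap J=\emptyset$, and symmetrically for $(D_2,E_1)$---so that the difference $x_0y_0^T-xy^T$ lives in precisely the seven blocks covered by Lemma~\ref{lem:fcr3}, leaving no eighth or ninth term unaccounted for. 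Everything else is a direct transcription of the mixed-constraint argument.
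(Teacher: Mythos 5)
Your proposal is correct and follows essentially the same route as the paper's own proof: fix the four supports, partition $D$ and $E$ into the blocks $D_0,D_1,D_2$ and $E_0,E_1,E_2$, invoke Lemma~\ref{lem:fcr3} to get linear independence of the seven block matrices, conclude every $v_{ij}=0$ hence $x_0y_0^T=xy^T$, and finish by a union bound over the finitely many support choices. Your explicit check that the cross blocks $(D_1,E_2)$ and $(D_2,E_1)$ receive no contribution is exactly the bookkeeping implicit in the paper's decomposition, so there is nothing to add.
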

\begin{proof}
Fix index sets $J_0,J\subset\{1,2,\cdots,m_1\}$, for which $|J_0|=|J|=s_1$ and $|J_0\bigcap J|=t_1$, and index sets $K_0,K\subset\{1,2,\cdots,m_2\}$, for which $|K_0|=|K|=s_2$ and $|K_0\bigcap K|=t_2$. Let 
\begin{align*}
&D_0 = D^{(:,J_0\bigcap J)}\in\bbC^{n\times t_1}, \quad D_1 = D^{(:,J_0\setminus J)}\in\bbC^{n\times (s_1-t_1)}, \quad D_2 = D^{(:,J\setminus J_0)}\in\bbC^{n\times (s_1-t_1)},\\
&E_0 = E^{(:,K_0\bigcap K)}\in\bbC^{n\times t_2}, \quad E_1 = E^{(:,K_0\setminus K)}\in\bbC^{n\times (s_2-t_2)}, \quad E_2 = E^{(:,K\setminus K_0)}\in\bbC^{n\times (s_2-t_2)}.
\end{align*}
By Lemma \ref{lem:fcr3}, if $n\geq 2s_1s_2$, then for almost all $D$ and $E$, the columns of $G_{D_0E_0}$, $G_{D_1E_0}$, $G_{D_2E_0}$, $G_{D_0E_1}$, $G_{D_1E_1}$, $G_{D_0E_2}$, $G_{D_2E_2}$ together form a linearly independent set. For every $(x_0,y_0)$ and $(x,y)$ such that the $s_1$-sparse $x_0$ and $x$ are and supported on $J_0$ and $J$ respectively, and the $s_2$-sparse $y_0$ and $y$ are supported on $K_0$ and $K$ respectively, if $\mathcal{G}_{DE}(x_0y_0^T)=\mathcal{G}_{DE}(xy^T)$, then
\begin{align*}
& G_{D_0E_0}v_{00}+G_{D_1E_0}v_{10}+G_{D_2E_0}v_{20}+G_{D_0E_1}v_{01}+G_{D_1E_1}v_{11}+G_{D_0E_2}v_{02}+G_{D_2E_2}v_{22}\\
=& G_{DE}v=\mathcal{G}_{DE}(x_0y_0^T)-\mathcal{G}_{DE}(xy^T)=0,
\end{align*}
where $v=\operatorname{vec}(x_0y_0^T-xy^T)$, $v_{00} = \operatorname{vec}(x_0^{(J_0\bigcap J)}y_0^{(K_0\bigcap K)T}-x^{(J_0\bigcap J)}y^{(K_0\bigcap K)T})$, $v_{10} = \operatorname{vec}(x_0^{(J_0\setminus J)}y_0^{(K_0\bigcap K)T})$, $v_{20} = \operatorname{vec}(-x^{(J\setminus J_0)}y^{(K_0\bigcap K)T})$, $v_{01} = \operatorname{vec}(x_0^{(J_0\bigcap J)}y_0^{(K_0\setminus K)T})$, $v_{11} = \operatorname{vec}(x_0^{(J_0\setminus J)}y_0^{(K_0\setminus K)T})$, $v_{02} = \operatorname{vec}(-x^{(J_0\bigcap J)}y^{(K\setminus K_0)T})$, $v_{22} = \operatorname{vec}(-x^{(J\setminus J_0)}y^{(K\setminus K_0)T})$. By linear independence, the vectors $v_{00},v_{10},v_{20},v_{01},v_{11},v_{02},v_{22}$ are all zero vectors, and so is $v$. Hence for almost all $D$ and $E$, and all pairs $(x_0,y_0)$ and $(x,y)$ such that $x_0$ and $x$ are supported on $J_0$ and $J$ respectively, and $y_0$ and $y$ are supported on $K_0$ and $K$ respectively, if $\mathcal{G}_{DE}(x_0y_0^T)=\mathcal{G}_{DE}(xy^T)$, then $x_0y_0^T=xy^T$. Note that the supports $J_0, J, K_0, K$ are arbitrary, and there is only a finite number (${m_1 \choose s_1}^2{m_2 \choose s_2}^2$) of combinations of supports. Therefore, for almost all $D$ and $E$, every pair $(x_0,y_0)$ ($\|x_0\|_0\leq s_1$, $\|y_0\|_0\leq s_2$, $x_0\neq 0$, $y_0\neq 0$) is identifiable up to scaling.
\end{proof}

Due to symmetry, we can derive another sufficient condition for the scenario where $u=Dx$ resides in a $m_1$-dimensional subspace spanned by the columns of $D$, and $v=Ey$ is $s_2$-sparse over $E$.

For generic bases or frames, the above sample complexities $n\geq m_1m_2$, $n\geq 2s_1m_2$ or $n\geq 2s_1s_2$ are sufficient. These sampling complexities are not optimal, since they are in terms of the number of nonzero entries in $x_0y_0^T$, instead of the number of degrees of freedom in $x_0$ and $y_0$. For example, in the scenario with subspace constraints, Theorem \ref{thm:gebf1} requires $n\geq m_1m_2$ samples, versus the number of degrees of freedom, which is $m_1+m_2-1$. However, these results hold with essentially no assumptions on $D$ or $E$. They are the first algebraic sample complexities for blind deconvolution.

\section{Blind Deconvolution with a Sub-band Structured Basis} \label{sec:sbsb}
In this section, we consider the BD problem where the filter resides in a subspace spanned by a sub-band structured basis. For this setup, using the general framework for bilinear inverse problems we introduced recently in \cite{Li2015}, and Proposition \ref{pro:ibd} above, we derive much stronger, essentially optimal sample complexity results.

\begin{definition}\label{def:subband}
Let $\widetilde{E}=F_nE$, $E\in\bbC^{n\times m_2}$, and let $J_k$ denote the support of $\widetilde{E}^{(:,k)}$ ($1\leq k\leq m_2$). 
If
\[
\widehat{J}_k = J_k\setminus\left(\bigcup\limits_{k'\neq k}J_{k'}\right) \neq \emptyset \quad 
\text{for}\quad 1\leq k\leq m_2,
\]
then we say $E$ forms a sub-band structured basis. The nonempty index set $\widehat{J}_k$ and its cardinality $\ell_k\eqdef |\widehat{J}_k|$ are called the passband and the bandwidth of $E^{(:,k)}$, respectively.
\end{definition}

Like filters in a filter bank, the basis vectors of a sub-band structured basis are supported on different sub-bands in the Fourier domain (Figure \ref{fig:sbsb_a}). By Definition \ref{def:subband}, the sub-bands may overlap partially. For each sub-band, its passband consists of the frequency components (which need not be contiguous) that are not present in any other sub-band. For example, in acoustic signal processing or communications, an equalizer that adjusts the relative gains $y^{(k)}$ of different frequency components can be considered as a filter $v=Ey$ that resides in a subspace with a sub-band structured basis. See Figure \ref{fig:sbsb_b} for the DFTs of three different equalizers, and Figure \ref{fig:filterbank} for the filter bank implementation of an equalizer.
\begin{figure}[htbp]%
\centering
\subfigure[]{\input{sbsb_a.tex}\label{fig:sbsb_a}}
\subfigure[]{\input{sbsb_b.tex}\label{fig:sbsb_b}}
\caption{A sub-band structured basis. (a) DFTs of basis vectors. (b) Examples of frequency responses of filters in the span of the sub-band structured basis.}%
\label{fig:sbsb}%

\vspace{0.3in}
\includegraphics[width=0.5\columnwidth]{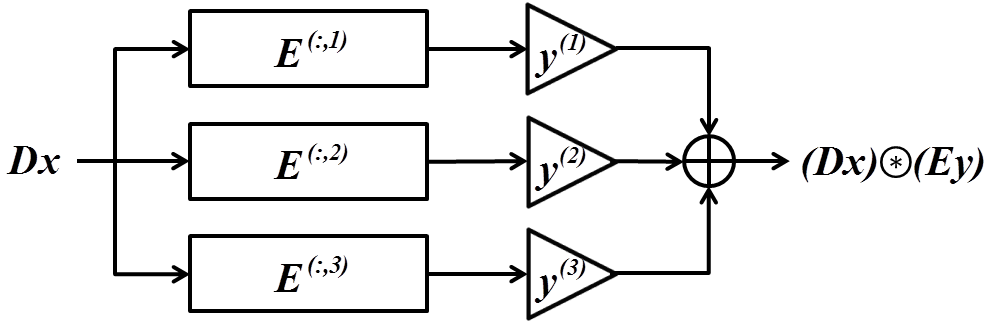}
\caption{Filter bank implementation of an equalizer.}%
\label{fig:filterbank}%
\end{figure}

Next, we address the identifiability of the blind deconvolution problem where the filter resides in a subspace with a sub-band structured basis, and the signal resides in another subspace, or is sparse over some given dictionary.
For example, consider the following blind deconvolution problem in channel encoding. An unknown source string $x$ is encoded by a given tall-and-skinny matrix $D$ and then transmitted through a channel whose gains in different sub-bands are unknown. Then the encoded string $Dx$ resides in a subspace spanned by $D$, and the channel resides in a subspace with a sub-band structured basis. Simultaneous recovery of the channel and the encoded string from measurements of the channel output corresponds to blind deconvolution with a sub-band structured basis.
Another example is the channel identification problem where the acoustic channel can be modeled as the serial concatentation of an equalizer and a multipath channel. The equalizer has known sub-bands but unknown gains. The multipath channel can be modeled as a sparse filter. Then the simultaneous recovery of the sparse multipath channel and the equalizer from the given input and measured output of the channel corresponds to blind deconvolution with a sub-band structured basis.

We consider first the case of subspace constraints, with one of the bases having a sub-band structure. 
\begin{theorem}\label{thm:sbsb}
In (BD) with subspace constraints, suppose $E$ forms a sub-band structured basis, $x_0\in\bbC^{m_1}$ is nonzero and $y_0\in\bbC^{m_2}$ is non-vanishing. If the sum of all the bandwidths $\sum_{k=1}^{m_2}\ell_k \geq m_1+m_2-1$,  then for almost all $D\in\bbC^{n\times m_1}$, the pair $(x_0,y_0)\in\Omega_\calX\times \Omega_\calY$ is identifiable up to scaling.
\end{theorem}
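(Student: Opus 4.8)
The plan is to verify the two conditions of Proposition \ref{pro:ibd}, working in the Fourier domain where circular convolution becomes the entrywise product. Writing $\widetilde D = F_nD$ and $\widetilde E = F_nE$, the identity $(Dx)\circledast(Ey)=(Dx_0)\circledast(Ey_0)$ is equivalent to $(\widetilde D x)\odot(\widetilde E y)=(\widetilde D x_0)\odot(\widetilde E y_0)$, i.e. to the scalar equations $(\widetilde D x)^{(f)}(\widetilde E y)^{(f)}=(\widetilde D x_0)^{(f)}(\widetilde E y_0)^{(f)}$ at every frequency $f$. The sub-band structure is what makes these tractable: on a passband frequency $f\in\widehat J_k$ only the $k$th column of $\widetilde E$ is active, so $(\widetilde E y)^{(f)}=\widetilde E^{(f,k)}y^{(k)}$ with $\widetilde E^{(f,k)}\neq 0$. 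The driving genericity fact is that since $x_0\neq 0$, for almost all $D$ the vector $\widetilde D x_0 = F_n(Dx_0)$ is non-vanishing; I assume this throughout. Condition~2 is then immediate: if $(Dx_0)\circledast(Ey)=(Dx_0)\circledast(Ey_0)$, then for $f\in\widehat J_k$ we get $(\widetilde D x_0)^{(f)}\widetilde E^{(f,k)}y^{(k)}=(\widetilde D x_0)^{(f)}\widetilde E^{(f,k)}y_0^{(k)}$, and cancelling the two nonzero scalars gives $y^{(k)}=y_0^{(k)}$ for every $k$, so $y=y_0$.

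For Condition~1, take $(x,y)$ with $(\widetilde D x)\odot(\widetilde E y)=(\widetilde D x_0)\odot(\widetilde E y_0)$. On $f\in\widehat J_k$ this reduces to $(\widetilde D x)^{(f)}y^{(k)}=(\widetilde D x_0)^{(f)}y_0^{(k)}$. Because both $y_0$ and $\widetilde D x_0$ are non-vanishing, the right-hand side is nonzero on each passband, forcing $y^{(k)}\neq 0$; hence, setting $\sigma_k\eqdef y_0^{(k)}/y^{(k)}$, I get $(\widetilde D x)^{(f)}=\sigma_k(\widetilde D x_0)^{(f)}$ for all $f\in\widehat J_k$, i.e. $\widetilde D^{(\widehat J_k,:)}(x-\sigma_k x_0)=0$ for each $k$. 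Thus $x$ lies in $W_k\eqdef \calN\big(\widetilde D^{(\widehat J_k,:)}\big)+\spanof\{x_0\}$ for every $k$, and the goal reduces to showing $\bigcap_{k=1}^{m_2}W_k=\spanof\{x_0\}$ for almost all $D$: this yields $x\in\spanof\{x_0\}$, and then $x_0\notin\calN(\widetilde D^{(\widehat J_k,:)})$ (again by non-vanishing of $\widetilde D x_0$) forces all $\sigma_k$ equal, giving $x=\sigma x_0$.

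The crux is the intersection claim. Passing to the quotient $\bbC^{m_1}/\spanof\{x_0\}\cong\bbC^{m_1-1}$ via the projection $\pi$, and using $x_0\notin\calN(\widetilde D^{(\widehat J_k,:)})$, each $W_k$ maps onto $V_k\eqdef\pi\big(\calN(\widetilde D^{(\widehat J_k,:)})\big)$, so that $\bigcap_kW_k=\spanof\{x_0\}$ is equivalent to $\bigcap_kV_k=\{0\}$. We may assume each $\ell_k\leq m_1$ (otherwise some $V_k=\{0\}$ and the intersection is already trivial), so that $\dim V_k=m_1-\ell_k$, i.e. $V_k$ has codimension $\ell_k-1$ in $\bbC^{m_1-1}$; the hypothesis $\sum_k\ell_k\geq m_1+m_2-1$ gives $\sum_k(\ell_k-1)=\sum_k\ell_k-m_2\geq m_1-1$, so the total codimension is at least the ambient dimension. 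Since the passbands $\widehat J_k$ are pairwise disjoint, the blocks $\widetilde D^{(\widehat J_k,:)}$ involve disjoint rows of the generic matrix $\widetilde D$ (and $D\mapsto\widetilde D=F_nD$ is invertible, so genericity in $D$ is genericity in $\widetilde D$), making the $V_k$ mutually generic subspaces; for generic subspaces whose codimensions sum to at least the ambient dimension the intersection is trivial. Rigorously, $\{\widetilde D:\dim\bigcap_kV_k>0\}$ is a Zariski-closed condition, so it suffices to exhibit one admissible $D$ realizing $\bigcap_kV_k=\{0\}$; one can choose the relevant rows so that the $V_k$ are coordinate subspaces whose excluded coordinates, totalling $\sum_k(\ell_k-1)\geq m_1-1$, cover all $m_1-1$ coordinates.

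I expect this last transversality step to be the main obstacle: one must argue that the $V_k$ genuinely behave like generic subspaces and produce a single explicit admissible $D$ giving the trivial intersection, all while honouring the coupling that the quantities feeding the $\sigma_k$ are themselves $\widetilde D^{(\widehat J_k,:)}x_0$ rather than free parameters. Everything else is routine once this is settled: the Fourier reduction and the passband cancellations are direct, and the bookkeeping shows that $\sum_k\ell_k\geq m_1+m_2-1$ is exactly the budget needed for the codimensions of the $V_k$ to fill $\bbC^{m_1-1}$, which is consistent with the $m_1+m_2-1$ degrees of freedom in $(x_0,y_0)$.
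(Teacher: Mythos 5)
Your proposal is correct and follows essentially the same route as the paper's proof: both verify the two conditions of Proposition \ref{pro:ibd} in the Fourier domain, derive $x\in\calN(\widetilde{D}^{(\widehat{J}_k,:)})+\spanof(x_0)$ from each passband, and conclude by a dimension-counting genericity argument using $\sum_{k}(\ell_k-1)\geq m_1-1$. Your quotient-space claim $\bigcap_k V_k=\{0\}$ is precisely the dual of the paper's statement that the generic $(\ell_k-1)$-dimensional subspaces $\calV_k=\calR(\widetilde{D}^{(\widehat{J}_k,:)*})\bigcap x_0^\perp$ sum to all of $x_0^\perp$, so the transversality step you flag as the main obstacle is the same one the paper asserts for almost all $D$ (and with no more rigor than your Zariski-closedness-plus-witness sketch provides).
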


\begin{proof}
Let $\tilde{D} = F_nD$, $\tilde{E}=F_nE$. By the sub-band structure assumption, $\tilde{E}$ has full column rank. For nonzero $x_0$ and for almost all $D$, all the entries of $\widetilde{D}x_0$ are nonzero and the matrix $\diag(\widetilde{D}x_0)\widetilde{E}$ has full column rank. If there exists $y\in\Omega_\calY$ such that $(Dx_0)\circledast(Ey)=(Dx_0)\circledast(Ey_0)$, then
\[
\diag(\widetilde{D}x_0)\widetilde{E}y = (\widetilde{D}x_0)\odot(\widetilde{E}y)=(\widetilde{D}x_0)\odot(\widetilde{E}y_0)=\diag(\widetilde{D}x_0)\widetilde{E}y_0.
\]
It follows that $y=y_0$. By Proposition \ref{pro:ibd}, to complete the proof, we only need to show that if there exists $(x,y)\in\Omega_\calX\times \Omega_\calY$ such that $(Dx)\circledast(Ey)=(Dx_0)\circledast(Ey_0)$ then $x=\sigma x_0$ for some nonzero $\sigma$.

If there exists $(x,y)\in\Omega_\calX\times \Omega_\calY$ such that $(Dx)\circledast(Ey)=(Dx_0)\circledast(Ey_0)$, we have
\[
\diag(\widetilde{E}y)\widetilde{D}x = \diag(\widetilde{E}y_0)\widetilde{D}x_0.
\]
Considering the passband $\widehat{J}_k$, we have
\[
\diag(\widetilde{E}^{(\widehat{J}_k,:)}y)\widetilde{D}^{(\widehat{J}_k,:)}x = \diag(\widetilde{E}^{(\widehat{J}_k,:)}y_0)\widetilde{D}^{(\widehat{J}_k,:)}x_0,
\]
or equivalently
\begin{align*}
\diag(\widetilde{E}^{(\widehat{J}_k,k)})\widetilde{D}^{(\widehat{J}_k,:)}xy^{(k)} &= \diag(\widetilde{E}^{(\widehat{J}_k,k)})\widetilde{D}^{(\widehat{J}_k,:)}x_0y_0^{(k)}, \\
\widetilde{D}^{(\widehat{J}_k,:)}xy^{(k)} &= \widetilde{D}^{(\widehat{J}_k,:)}x_0y_0^{(k)}.
\end{align*}
By assumption, $y_0^{(k)}\neq 0$. For almost all $D$, $\widetilde{D}^{(\widehat{J}_k,:)}x_0\neq 0$. Hence $y^{(k)}\neq 0$, $x\neq 0$. It follows that
\[
\widetilde{D}^{(\widehat{J}_k,:)}(x-\frac{y_0^{(k)}}{y^{(k)}}x_0) = 0,
\]
Hence
\begin{equation}
x \in \calN(\widetilde{D}^{(\widehat{J}_k,:)})+\spanof(x_0).\label{eq:1}
\end{equation}
Let $x_0^\perp$ denote the orthogonal complement of $\spanof(x_0)$. Then
\begin{align*}
&P_{x_0^\perp} x \in x_0^\perp, \\
&P_{x_0^\perp} x = x-P_{\spanof(x_0)}x \in \calN(\widetilde{D}^{(\widehat{J}_k,:)})+\spanof(x_0).
\end{align*}
Hence
\begin{equation}
P_{x_0^\perp} x \in x_0^\perp \bigcap \left(\calN(\widetilde{D}^{(\widehat{J}_k,:)})+\spanof(x_0)\right) = x_0^\perp \bigcap \left(\calR(\widetilde{D}^{(\widehat{J}_k,:)*})\bigcap x_0^\perp \right)^\perp, \qquad \forall k\in\{1,2,\cdots,m_2\},\label{eq:2}
\end{equation}
where $(\cdot)^*$ denotes the conjugate transpose.
The equation holds due to the fact that, for linear vector spaces $\calV_1$ and $\calV_2$, $\calV_1+\calV_2=(\calV_1^\perp\bigcap \calV_2^\perp)^\perp$.

Now, for almost all $D$, $\calR(\widetilde{D}^{(\widehat{J}_k,:)*})$ is a generic $\ell_k$-dimensional subspace of $\bbC^{m_1}$, and $\calR(\widetilde{D}^{(\widehat{J}_k,:)*})\not\subset x_0^\perp$. Hence there exists a generic $(\ell_k-1)$-dimensional subspace $\calV_k\subset x_0^\perp$ such that
\begin{align*}
& \calR(\widetilde{D}^{(\widehat{J}_k,:)*}) = \calV_k \oplus \spanof(P_{\calR(\widetilde{D}^{(\widehat{J}_k,:)*})}x_0),\\
& \calR(\widetilde{D}^{(\widehat{J}_k,:)*})\bigcap x_0^\perp = \calV_k.
\end{align*}
Therefore, \eqref{eq:2} is equivalent to
\[
P_{x_0^\perp} x \in x_0^\perp \bigcap \calV_1^\perp \bigcap \calV_2^\perp \bigcap \cdots \bigcap \calV_{m_2}^\perp = (\spanof(x_0)+\sum_{k=1}^{m_2}\calV_k)^\perp,
\]
where $\calV_1,\calV_2,\cdots,\calV_{m_2}$ are generic subspaces of $x_0^\perp$, the dimensions of which are $\ell_1-1,\ell_2-1,\cdots,\ell_{m_2}-1$. For any such generic subspaces of $x_0^\perp$, if $\sum_{k=1}^{m_2}\ell_k \geq m_1+m_2-1$, i.e., $\sum_{k=1}^{m_2}(\ell_k-1) \geq m_1-1$, then,
\[
\sum_{k=1}^{m_2}\calV_k = x_0^\perp.
\]
Hence
\begin{align*}
\spanof(x_0)+\sum_{k=1}^{m_2}\calV_k = \bbC^{m_1},\\
P_{x_0^\perp} x \in(\spanof(x_0)+\sum_{k=1}^{m_2}\calV_k)^\perp = \{0\}.
\end{align*}
Therefore, $P_{x_0^\perp} x = 0$, or $x\in\spanof(x_0)$. We have shown that $x\neq 0$, hence there exists a nonzero $\sigma\in\bbC$ such that $x=\sigma x_0$. The proof is complete.
\end{proof}

We turn next to the case of blind deconvolution with mixed constraints, where the signal lives in a subspace spanned by a sub-band structured basis, and the filter is sparse.
\begin{theorem}\label{thm:sbsbmix}
In (BD) with mixed constraints, suppose $E$ forms a sub-band structured basis, $x_0\in\bbC^{m_1}$ satisfies that $\|x_0\|_0\leq s_1$ and $x_0\neq 0$, and $y_0\in\bbC^{m_2}$ is non-vanishing. If the sum of all the bandwidths $\sum_{k=1}^{m_2}\ell_k \geq 2s_1+m_2-1$, then for almost all $D\in\bbC^{n\times m_1}$, the pair $(x_0,y_0)\in\Omega_\calX\times \Omega_\calY$ is identifiable up to scaling.
\end{theorem}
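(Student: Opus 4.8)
The plan is to mirror the proof of Theorem \ref{thm:sbsb}, invoking Proposition \ref{pro:ibd} to split identifiability into its two conditions and running the same passband-by-passband analysis, but now accounting for the sparsity of both $x_0$ and any competing solution $x$. Condition 2 of Proposition \ref{pro:ibd} requires no change at all: since the filter still lives in the full subspace $\Omega_\calY = \bbC^{m_2}$, the same computation yields $\diag(\widetilde{D}x_0)\widetilde{E}y = \diag(\widetilde{D}x_0)\widetilde{E}y_0$, and because $x_0 \neq 0$, for almost all $D$ the vector $\widetilde{D}x_0$ is non-vanishing and $\diag(\widetilde{D}x_0)\widetilde{E}$ has full column rank, forcing $y = y_0$.

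The real work is in Condition 1. First I would fix the given support $J_0 = \supp(x_0)$ and, for a putative competing solution $(x,y)$, set $J = \supp(x)$ with $|J| \leq s_1$. Both $x_0$ and $x$ then lie in the coordinate subspace $\bbC^{L}$ indexed by $L = J_0 \cup J$, whose dimension is at most $2s_1$. The passband restriction used in Theorem \ref{thm:sbsb} carries over verbatim to give, for each $k$, the identity $\widetilde{D}^{(\widehat{J}_k,:)} x\, y^{(k)} = \widetilde{D}^{(\widehat{J}_k,:)} x_0\, y_0^{(k)}$, hence $x - (y_0^{(k)}/y^{(k)}) x_0 \in \calN(\widetilde{D}^{(\widehat{J}_k,:)})$. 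Since this difference is supported on $L$, I would read the entire orthogonal-complement and dimension-counting argument \emph{inside} $\bbC^{L}$, replacing $\widetilde{D}^{(\widehat{J}_k,:)}$ by its column restriction $\widetilde{D}^{(\widehat{J}_k,L)}$ and replacing $x_0^\perp$ by the orthogonal complement of $\spanof(x_0)$ within $\bbC^{L}$.

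This reduces the claim to exactly the situation of Theorem \ref{thm:sbsb} with the ambient dimension $m_1$ replaced by $|L| \leq 2s_1$: for almost all $D$ the matrix $F_n D$, and hence each submatrix $\widetilde{D}^{(\widehat{J}_k,L)}$, is generic, so $\calV_k = \calR(\widetilde{D}^{(\widehat{J}_k,L)*}) \cap x_0^\perp$ is a generic subspace of $x_0^\perp$ of dimension $\ell_k - 1$ (the conclusion being immediate when $\widetilde{D}^{(\widehat{J}_k,L)}$ already has trivial nullspace). These generic subspaces span $x_0^\perp$ as soon as $\sum_k (\ell_k - 1) \geq |L| - 1$, i.e. $\sum_k \ell_k \geq |L| + m_2 - 1$, which the hypothesis $\sum_k \ell_k \geq 2s_1 + m_2 - 1 \geq |L| + m_2 - 1$ guarantees. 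This yields $P_{x_0^\perp} x = 0$, hence $x \in \spanof(x_0)$, and nonvanishing-ness of the passband products forces $x = \sigma x_0$ with $\sigma \neq 0$.

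I expect the main obstacle to be the bookkeeping of the \emph{unknown} support of the competing solution $x$: unlike in Theorem \ref{thm:sbsb}, I cannot work in a single fixed subspace, so I must take a union over the finitely many candidate supports $J$ (at most $\binom{m_1}{s_1}$ of them) and verify that the full-measure set of admissible $D$ associated with each support intersects to a full-measure set. This necessity of enlarging the effective dimension from $s_1$ to $|L| \le 2s_1$ is precisely what produces the factor of $2$ in the sample complexity $2s_1 + m_2 - 1$, and the delicate point is to confirm that the required genericity of $\calR(\widetilde{D}^{(\widehat{J}_k,L)*})$ in $\bbC^{L}$, together with the transversality $\calR(\widetilde{D}^{(\widehat{J}_k,L)*}) \not\subset x_0^\perp$, holds simultaneously across all of these supports.
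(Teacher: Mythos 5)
Your proposal is correct and follows essentially the same route as the paper's proof: split via Proposition \ref{pro:ibd}, restrict to the coordinates indexed by the union $\supp(x_0)\cup\supp(x)$ of cardinality at most $2s_1$, rerun the passband/generic-subspace dimension count of Theorem \ref{thm:sbsb} in that restricted space, and finish by intersecting the full-measure sets of good $D$ over the finitely many (${m_1\choose s_1}$) candidate supports. The paper handles the ``obstacle'' you flag in exactly the way you anticipate, so no further work is needed.
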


\begin{proof}
The proof is very similar to that of Theorem \ref{thm:sbsb}. 
For nonzero $x_0$ and almost all $D$, if there exists $y\in\Omega_\calY$ such that $(Dx_0)\circledast(Ey)=(Dx_0)\circledast(Ey_0)$, then $y=y_0$. By Proposition \ref{pro:ibd}, to complete the proof, we only need to show that if there exists $(x,y)\in\Omega_\calX\times \Omega_\calY$ such that $\|x\|_0\leq s_1$ and $(Dx)\circledast(Ey)=(Dx_0)\circledast(Ey_0)$, then $x=\sigma x_0$ for some nonzero $\sigma$.

Denote the support of $x_0$ by $K_0$, $|K_0|= s_1$. If there exists $(x,y)\in\Omega_\calX\times \Omega_\calY$ such that $x$ is supported on $K$, $|K|= s_1$, and $(Dx)\circledast(Ey)=(Dx_0)\circledast(Ey_0)$, then
\[
\diag(\widetilde{E}y)\widetilde{D}^{(:,K_0\bigcup K)}x^{(K_0\bigcup K)} = \diag(\widetilde{E}y_0)\widetilde{D}^{(:,K_0\bigcup K)}x_0^{(K_0\bigcup K)}.
\]
In this case, \eqref{eq:1} and \eqref{eq:2} in the proof of Theorem \ref{thm:sbsb} become
\begin{align*}
& x^{(K_0\bigcup K)} \in \calN(\widetilde{D}^{(\widehat{J}_k,K_0\bigcup K)})+\spanof(x_0^{(K_0\bigcup K)}),\\
& P_{x_0^{(K_0\bigcup K)\perp}} x^{(K_0\bigcup K)} \in  x_0^{(K_0\bigcup K)\perp} \bigcap \left(\calR(\widetilde{D}^{(\widehat{J}_k,K_0\bigcup K)*})\bigcap x_0^{(K_0\bigcup K)\perp} \right)^\perp, \qquad \forall k\in\{1,2,\cdots,m_2\}.
\end{align*}
Since $|K_0|=|K|= s_1$, we have $|K_0\bigcup K|\leq 2s_1$. If $\sum_{k=1}^{m_2}\ell_k \geq 2s_1+m_2-1$, then by an argument analogous to that in the proof of Theorem \ref{thm:sbsb}, we have that for almost all $D$, $P_{x_0^{(K_0\bigcup K)\perp}} x^{(K_0\bigcup K)}$ must be $0$. Therefore, there exists a nonzero $\sigma\in\bbC$ such that $x=\sigma x_0$. 

We complete the proof by enumerating all supports $K$ of cardinality $s_1$. Since there is only a finite number (${m_1\choose s_1}$) of such supports, for almost all $D$, if there exists $(x,y)$ such that $x$ is $s_1$-sparse and $(Dx)\circledast(Ey)=(Dx_0)\circledast(Ey_0)$, then $x=\sigma x_0$ for some nonzero $\sigma$. 
\end{proof}

How do the sufficient conditions of Theorems \ref{thm:sbsb} and \ref{thm:sbsbmix} compare to the minimal required sample complexities? We address this question for the following scenario.
Suppose that the supports $J_k$ ($1\leq k\leq m_2$) form a partition of the frequency range, i.e.,
\begin{align*}
& J_{k_1}\bigcap J_{k_2} = \emptyset \quad\text{for all $k_1$ and $k_2$ such that $k_1\neq k_2$},\\
& \bigcup_{1\leq k\leq m_2}J_k = \{1,2,\cdots,n\}.
\end{align*}
In this case the passbands are $\widehat{J}_k=J_k$ and $n=\sum_{k=1}^{m_2}\ell_k$. For example, this scenario applies when the filter bank is an array of ideal bandpass filters whose passbands partition the DFT frequency range (See Figure \ref{fig:ideal}). Consider first (BD) with subspace constraints. Under the above scenario, the sufficient condition in Theorem \ref{thm:sbsb} implies $n\geq m_1+m_2-1$. Next, we show that this sample complexity is also necessary.

\begin{figure}
\centering
\subfigure[]{
%
%
\begin{tikzpicture}[scale=0.7]

\begin{axis}[%
width=3in,
height=0.7in,
scale only axis,
xmin=0,
xmax=28,
ymin=0,
ymax=1.2,
hide axis,
name=plot2,
axis x line*=bottom,
axis y line*=left
]
\addplot [color=white,solid,forget plot]
  table[row sep=crcr]{%
25	0\\
28	0\\
};
\addplot [color=white,solid,forget plot]
  table[row sep=crcr]{%
28	0\\
28	1.2\\
};
\addplot [color=blue,solid,line width=1.0pt,forget plot]
  table[row sep=crcr]{%
0	0\\
25	0\\
};
\addplot [color=blue,only marks,mark=o,mark options={solid},forget plot]
  table[row sep=crcr]{%
1	0\\
2	0\\
3	0\\
4	0\\
5	0\\
6	0\\
7	0\\
8	0\\
9	1\\
10	1\\
11	1\\
12	1\\
13	1\\
14	1\\
15	1\\
16	1\\
17	0\\
18	0\\
19	0\\
20	0\\
21	0\\
22	0\\
23	0\\
24	0\\
};
\addplot [color=blue,solid,line width=2.0pt,forget plot]
  table[row sep=crcr]{%
1	0\\
1	0\\
};
\addplot [color=blue,solid,line width=2.0pt,forget plot]
  table[row sep=crcr]{%
2	0\\
2	0\\
};
\addplot [color=blue,solid,line width=2.0pt,forget plot]
  table[row sep=crcr]{%
3	0\\
3	0\\
};
\addplot [color=blue,solid,line width=2.0pt,forget plot]
  table[row sep=crcr]{%
4	0\\
4	0\\
};
\addplot [color=blue,solid,line width=2.0pt,forget plot]
  table[row sep=crcr]{%
5	0\\
5	0\\
};
\addplot [color=blue,solid,line width=2.0pt,forget plot]
  table[row sep=crcr]{%
6	0\\
6	0\\
};
\addplot [color=blue,solid,line width=2.0pt,forget plot]
  table[row sep=crcr]{%
7	0\\
7	0\\
};
\addplot [color=blue,solid,line width=2.0pt,forget plot]
  table[row sep=crcr]{%
8	0\\
8	0\\
};
\addplot [color=blue,solid,line width=2.0pt,forget plot]
  table[row sep=crcr]{%
9	0\\
9	1\\
};
\addplot [color=blue,solid,line width=2.0pt,forget plot]
  table[row sep=crcr]{%
10	0\\
10	1\\
};
\addplot [color=blue,solid,line width=2.0pt,forget plot]
  table[row sep=crcr]{%
11	0\\
11	1\\
};
\addplot [color=blue,solid,line width=2.0pt,forget plot]
  table[row sep=crcr]{%
12	0\\
12	1\\
};
\addplot [color=blue,solid,line width=2.0pt,forget plot]
  table[row sep=crcr]{%
13	0\\
13	1\\
};
\addplot [color=blue,solid,line width=2.0pt,forget plot]
  table[row sep=crcr]{%
14	0\\
14	1\\
};
\addplot [color=blue,solid,line width=2.0pt,forget plot]
  table[row sep=crcr]{%
15	0\\
15	1\\
};
\addplot [color=blue,solid,line width=2.0pt,forget plot]
  table[row sep=crcr]{%
16	0\\
16	1\\
};
\addplot [color=blue,solid,line width=2.0pt,forget plot]
  table[row sep=crcr]{%
17	0\\
17	0\\
};
\addplot [color=blue,solid,line width=2.0pt,forget plot]
  table[row sep=crcr]{%
18	0\\
18	0\\
};
\addplot [color=blue,solid,line width=2.0pt,forget plot]
  table[row sep=crcr]{%
19	0\\
19	0\\
};
\addplot [color=blue,solid,line width=2.0pt,forget plot]
  table[row sep=crcr]{%
20	0\\
20	0\\
};
\addplot [color=blue,solid,line width=2.0pt,forget plot]
  table[row sep=crcr]{%
21	0\\
21	0\\
};
\addplot [color=blue,solid,line width=2.0pt,forget plot]
  table[row sep=crcr]{%
22	0\\
22	0\\
};
\addplot [color=blue,solid,line width=2.0pt,forget plot]
  table[row sep=crcr]{%
23	0\\
23	0\\
};
\addplot [color=blue,solid,line width=2.0pt,forget plot]
  table[row sep=crcr]{%
24	0\\
24	0\\
};
\end{axis}

\begin{axis}[%
width=3in,
height=0.7in,
scale only axis,
xmin=0,
xmax=28,
ymin=0,
ymax=1.2,
hide axis,
name=plot1,
at=(plot2.above north west),
anchor=below south west,
axis x line*=bottom,
axis y line*=left
]
\addplot [color=white,solid,forget plot]
  table[row sep=crcr]{%
25	0\\
28	0\\
};
\addplot [color=white,solid,forget plot]
  table[row sep=crcr]{%
28	0\\
28	1.2\\
};
\addplot [color=blue,solid,line width=1.0pt,forget plot]
  table[row sep=crcr]{%
0	0\\
25	0\\
};
\addplot [color=blue,only marks,mark=o,mark options={solid},forget plot]
  table[row sep=crcr]{%
1	1\\
2	1\\
3	1\\
4	1\\
5	1\\
6	1\\
7	1\\
8	1\\
9	0\\
10	0\\
11	0\\
12	0\\
13	0\\
14	0\\
15	0\\
16	0\\
17	0\\
18	0\\
19	0\\
20	0\\
21	0\\
22	0\\
23	0\\
24	0\\
};
\addplot [color=blue,solid,line width=2.0pt,forget plot]
  table[row sep=crcr]{%
1	0\\
1	1\\
};
\addplot [color=blue,solid,line width=2.0pt,forget plot]
  table[row sep=crcr]{%
2	0\\
2	1\\
};
\addplot [color=blue,solid,line width=2.0pt,forget plot]
  table[row sep=crcr]{%
3	0\\
3	1\\
};
\addplot [color=blue,solid,line width=2.0pt,forget plot]
  table[row sep=crcr]{%
4	0\\
4	1\\
};
\addplot [color=blue,solid,line width=2.0pt,forget plot]
  table[row sep=crcr]{%
5	0\\
5	1\\
};
\addplot [color=blue,solid,line width=2.0pt,forget plot]
  table[row sep=crcr]{%
6	0\\
6	1\\
};
\addplot [color=blue,solid,line width=2.0pt,forget plot]
  table[row sep=crcr]{%
7	0\\
7	1\\
};
\addplot [color=blue,solid,line width=2.0pt,forget plot]
  table[row sep=crcr]{%
8	0\\
8	1\\
};
\addplot [color=blue,solid,line width=2.0pt,forget plot]
  table[row sep=crcr]{%
9	0\\
9	0\\
};
\addplot [color=blue,solid,line width=2.0pt,forget plot]
  table[row sep=crcr]{%
10	0\\
10	0\\
};
\addplot [color=blue,solid,line width=2.0pt,forget plot]
  table[row sep=crcr]{%
11	0\\
11	0\\
};
\addplot [color=blue,solid,line width=2.0pt,forget plot]
  table[row sep=crcr]{%
12	0\\
12	0\\
};
\addplot [color=blue,solid,line width=2.0pt,forget plot]
  table[row sep=crcr]{%
13	0\\
13	0\\
};
\addplot [color=blue,solid,line width=2.0pt,forget plot]
  table[row sep=crcr]{%
14	0\\
14	0\\
};
\addplot [color=blue,solid,line width=2.0pt,forget plot]
  table[row sep=crcr]{%
15	0\\
15	0\\
};
\addplot [color=blue,solid,line width=2.0pt,forget plot]
  table[row sep=crcr]{%
16	0\\
16	0\\
};
\addplot [color=blue,solid,line width=2.0pt,forget plot]
  table[row sep=crcr]{%
17	0\\
17	0\\
};
\addplot [color=blue,solid,line width=2.0pt,forget plot]
  table[row sep=crcr]{%
18	0\\
18	0\\
};
\addplot [color=blue,solid,line width=2.0pt,forget plot]
  table[row sep=crcr]{%
19	0\\
19	0\\
};
\addplot [color=blue,solid,line width=2.0pt,forget plot]
  table[row sep=crcr]{%
20	0\\
20	0\\
};
\addplot [color=blue,solid,line width=2.0pt,forget plot]
  table[row sep=crcr]{%
21	0\\
21	0\\
};
\addplot [color=blue,solid,line width=2.0pt,forget plot]
  table[row sep=crcr]{%
22	0\\
22	0\\
};
\addplot [color=blue,solid,line width=2.0pt,forget plot]
  table[row sep=crcr]{%
23	0\\
23	0\\
};
\addplot [color=blue,solid,line width=2.0pt,forget plot]
  table[row sep=crcr]{%
24	0\\
24	0\\
};
\end{axis}

\begin{axis}[%
width=3in,
height=0.7in,
scale only axis,
xmin=0,
xmax=28,
ymin=0,
ymax=1.2,
hide axis,
at=(plot2.below south west),
anchor=above north west,
axis x line*=bottom,
axis y line*=left
]
\addplot [color=white,solid,forget plot]
  table[row sep=crcr]{%
25	0\\
28	0\\
};
\addplot [color=white,solid,forget plot]
  table[row sep=crcr]{%
28	0\\
28	1.2\\
};
\addplot [color=blue,solid,line width=1.0pt,forget plot]
  table[row sep=crcr]{%
0	0\\
25	0\\
};
\addplot [color=blue,only marks,mark=o,mark options={solid},forget plot]
  table[row sep=crcr]{%
1	0\\
2	0\\
3	0\\
4	0\\
5	0\\
6	0\\
7	0\\
8	0\\
9	0\\
10	0\\
11	0\\
12	0\\
13	0\\
14	0\\
15	0\\
16	0\\
17	1\\
18	1\\
19	1\\
20	1\\
21	1\\
22	1\\
23	1\\
24	1\\
};
\addplot [color=blue,solid,line width=2.0pt,forget plot]
  table[row sep=crcr]{%
1	0\\
1	0\\
};
\addplot [color=blue,solid,line width=2.0pt,forget plot]
  table[row sep=crcr]{%
2	0\\
2	0\\
};
\addplot [color=blue,solid,line width=2.0pt,forget plot]
  table[row sep=crcr]{%
3	0\\
3	0\\
};
\addplot [color=blue,solid,line width=2.0pt,forget plot]
  table[row sep=crcr]{%
4	0\\
4	0\\
};
\addplot [color=blue,solid,line width=2.0pt,forget plot]
  table[row sep=crcr]{%
5	0\\
5	0\\
};
\addplot [color=blue,solid,line width=2.0pt,forget plot]
  table[row sep=crcr]{%
6	0\\
6	0\\
};
\addplot [color=blue,solid,line width=2.0pt,forget plot]
  table[row sep=crcr]{%
7	0\\
7	0\\
};
\addplot [color=blue,solid,line width=2.0pt,forget plot]
  table[row sep=crcr]{%
8	0\\
8	0\\
};
\addplot [color=blue,solid,line width=2.0pt,forget plot]
  table[row sep=crcr]{%
9	0\\
9	0\\
};
\addplot [color=blue,solid,line width=2.0pt,forget plot]
  table[row sep=crcr]{%
10	0\\
10	0\\
};
\addplot [color=blue,solid,line width=2.0pt,forget plot]
  table[row sep=crcr]{%
11	0\\
11	0\\
};
\addplot [color=blue,solid,line width=2.0pt,forget plot]
  table[row sep=crcr]{%
12	0\\
12	0\\
};
\addplot [color=blue,solid,line width=2.0pt,forget plot]
  table[row sep=crcr]{%
13	0\\
13	0\\
};
\addplot [color=blue,solid,line width=2.0pt,forget plot]
  table[row sep=crcr]{%
14	0\\
14	0\\
};
\addplot [color=blue,solid,line width=2.0pt,forget plot]
  table[row sep=crcr]{%
15	0\\
15	0\\
};
\addplot [color=blue,solid,line width=2.0pt,forget plot]
  table[row sep=crcr]{%
16	0\\
16	0\\
};
\addplot [color=blue,solid,line width=2.0pt,forget plot]
  table[row sep=crcr]{%
17	0\\
17	1\\
};
\addplot [color=blue,solid,line width=2.0pt,forget plot]
  table[row sep=crcr]{%
18	0\\
18	1\\
};
\addplot [color=blue,solid,line width=2.0pt,forget plot]
  table[row sep=crcr]{%
19	0\\
19	1\\
};
\addplot [color=blue,solid,line width=2.0pt,forget plot]
  table[row sep=crcr]{%
20	0\\
20	1\\
};
\addplot [color=blue,solid,line width=2.0pt,forget plot]
  table[row sep=crcr]{%
21	0\\
21	1\\
};
\addplot [color=blue,solid,line width=2.0pt,forget plot]
  table[row sep=crcr]{%
22	0\\
22	1\\
};
\addplot [color=blue,solid,line width=2.0pt,forget plot]
  table[row sep=crcr]{%
23	0\\
23	1\\
};
\addplot [color=blue,solid,line width=2.0pt,forget plot]
  table[row sep=crcr]{%
24	0\\
24	1\\
};
\end{axis}
\end{tikzpicture}
\subfigure[]{
%
%
\begin{tikzpicture}[scale=0.7]

\begin{axis}[%
width=3in,
height=0.7in,
scale only axis,
xmin=0,
xmax=28,
ymin=0,
ymax=1.2,
hide axis,
name=plot4,
axis x line*=bottom,
axis y line*=left
]
\addplot [color=white,solid,forget plot]
  table[row sep=crcr]{%
25	0\\
28	0\\
};
\addplot [color=white,solid,forget plot]
  table[row sep=crcr]{%
28	0\\
28	1.2\\
};
\addplot [color=blue,solid,line width=1.0pt,forget plot]
  table[row sep=crcr]{%
0	0\\
25	0\\
};
\addplot [color=blue,only marks,mark=o,mark options={solid},forget plot]
  table[row sep=crcr]{%
1	0.5\\
2	0.5\\
3	0.5\\
4	0.5\\
5	0.5\\
6	0.5\\
7	0.5\\
8	0.5\\
9	1\\
10	1\\
11	1\\
12	1\\
13	1\\
14	1\\
15	1\\
16	1\\
17	0.5\\
18	0.5\\
19	0.5\\
20	0.5\\
21	0.5\\
22	0.5\\
23	0.5\\
24	0.5\\
};
\addplot [color=blue,solid,line width=2.0pt,forget plot]
  table[row sep=crcr]{%
1	0\\
1	0.5\\
};
\addplot [color=blue,solid,line width=2.0pt,forget plot]
  table[row sep=crcr]{%
2	0\\
2	0.5\\
};
\addplot [color=blue,solid,line width=2.0pt,forget plot]
  table[row sep=crcr]{%
3	0\\
3	0.5\\
};
\addplot [color=blue,solid,line width=2.0pt,forget plot]
  table[row sep=crcr]{%
4	0\\
4	0.5\\
};
\addplot [color=blue,solid,line width=2.0pt,forget plot]
  table[row sep=crcr]{%
5	0\\
5	0.5\\
};
\addplot [color=blue,solid,line width=2.0pt,forget plot]
  table[row sep=crcr]{%
6	0\\
6	0.5\\
};
\addplot [color=blue,solid,line width=2.0pt,forget plot]
  table[row sep=crcr]{%
7	0\\
7	0.5\\
};
\addplot [color=blue,solid,line width=2.0pt,forget plot]
  table[row sep=crcr]{%
8	0\\
8	0.5\\
};
\addplot [color=blue,solid,line width=2.0pt,forget plot]
  table[row sep=crcr]{%
9	0\\
9	1\\
};
\addplot [color=blue,solid,line width=2.0pt,forget plot]
  table[row sep=crcr]{%
10	0\\
10	1\\
};
\addplot [color=blue,solid,line width=2.0pt,forget plot]
  table[row sep=crcr]{%
11	0\\
11	1\\
};
\addplot [color=blue,solid,line width=2.0pt,forget plot]
  table[row sep=crcr]{%
12	0\\
12	1\\
};
\addplot [color=blue,solid,line width=2.0pt,forget plot]
  table[row sep=crcr]{%
13	0\\
13	1\\
};
\addplot [color=blue,solid,line width=2.0pt,forget plot]
  table[row sep=crcr]{%
14	0\\
14	1\\
};
\addplot [color=blue,solid,line width=2.0pt,forget plot]
  table[row sep=crcr]{%
15	0\\
15	1\\
};
\addplot [color=blue,solid,line width=2.0pt,forget plot]
  table[row sep=crcr]{%
16	0\\
16	1\\
};
\addplot [color=blue,solid,line width=2.0pt,forget plot]
  table[row sep=crcr]{%
17	0\\
17	0.5\\
};
\addplot [color=blue,solid,line width=2.0pt,forget plot]
  table[row sep=crcr]{%
18	0\\
18	0.5\\
};
\addplot [color=blue,solid,line width=2.0pt,forget plot]
  table[row sep=crcr]{%
19	0\\
19	0.5\\
};
\addplot [color=blue,solid,line width=2.0pt,forget plot]
  table[row sep=crcr]{%
20	0\\
20	0.5\\
};
\addplot [color=blue,solid,line width=2.0pt,forget plot]
  table[row sep=crcr]{%
21	0\\
21	0.5\\
};
\addplot [color=blue,solid,line width=2.0pt,forget plot]
  table[row sep=crcr]{%
22	0\\
22	0.5\\
};
\addplot [color=blue,solid,line width=2.0pt,forget plot]
  table[row sep=crcr]{%
23	0\\
23	0.5\\
};
\addplot [color=blue,solid,line width=2.0pt,forget plot]
  table[row sep=crcr]{%
24	0\\
24	0.5\\
};
\end{axis}

\begin{axis}[%
width=3in,
height=0.7in,
scale only axis,
xmin=0,
xmax=28,
ymin=0,
ymax=1.2,
hide axis,
name=plot5,
at=(plot4.below south west),
anchor=above north west,
axis x line*=bottom,
axis y line*=left
]
\addplot [color=white,solid,forget plot]
  table[row sep=crcr]{%
25	0\\
28	0\\
};
\addplot [color=white,solid,forget plot]
  table[row sep=crcr]{%
28	0\\
28	1.2\\
};
\addplot [color=blue,solid,line width=1.0pt,forget plot]
  table[row sep=crcr]{%
0	0\\
25	0\\
};
\addplot [color=blue,only marks,mark=o,mark options={solid},forget plot]
  table[row sep=crcr]{%
1	1\\
2	1\\
3	1\\
4	1\\
5	1\\
6	1\\
7	1\\
8	1\\
9	0.7\\
10	0.7\\
11	0.7\\
12	0.7\\
13	0.7\\
14	0.7\\
15	0.7\\
16	0.7\\
17	0.3\\
18	0.3\\
19	0.3\\
20	0.3\\
21	0.3\\
22	0.3\\
23	0.3\\
24	0.3\\
};
\addplot [color=blue,solid,line width=2.0pt,forget plot]
  table[row sep=crcr]{%
1	0\\
1	1\\
};
\addplot [color=blue,solid,line width=2.0pt,forget plot]
  table[row sep=crcr]{%
2	0\\
2	1\\
};
\addplot [color=blue,solid,line width=2.0pt,forget plot]
  table[row sep=crcr]{%
3	0\\
3	1\\
};
\addplot [color=blue,solid,line width=2.0pt,forget plot]
  table[row sep=crcr]{%
4	0\\
4	1\\
};
\addplot [color=blue,solid,line width=2.0pt,forget plot]
  table[row sep=crcr]{%
5	0\\
5	1\\
};
\addplot [color=blue,solid,line width=2.0pt,forget plot]
  table[row sep=crcr]{%
6	0\\
6	1\\
};
\addplot [color=blue,solid,line width=2.0pt,forget plot]
  table[row sep=crcr]{%
7	0\\
7	1\\
};
\addplot [color=blue,solid,line width=2.0pt,forget plot]
  table[row sep=crcr]{%
8	0\\
8	1\\
};
\addplot [color=blue,solid,line width=2.0pt,forget plot]
  table[row sep=crcr]{%
9	0\\
9	0.7\\
};
\addplot [color=blue,solid,line width=2.0pt,forget plot]
  table[row sep=crcr]{%
10	0\\
10	0.7\\
};
\addplot [color=blue,solid,line width=2.0pt,forget plot]
  table[row sep=crcr]{%
11	0\\
11	0.7\\
};
\addplot [color=blue,solid,line width=2.0pt,forget plot]
  table[row sep=crcr]{%
12	0\\
12	0.7\\
};
\addplot [color=blue,solid,line width=2.0pt,forget plot]
  table[row sep=crcr]{%
13	0\\
13	0.7\\
};
\addplot [color=blue,solid,line width=2.0pt,forget plot]
  table[row sep=crcr]{%
14	0\\
14	0.7\\
};
\addplot [color=blue,solid,line width=2.0pt,forget plot]
  table[row sep=crcr]{%
15	0\\
15	0.7\\
};
\addplot [color=blue,solid,line width=2.0pt,forget plot]
  table[row sep=crcr]{%
16	0\\
16	0.7\\
};
\addplot [color=blue,solid,line width=2.0pt,forget plot]
  table[row sep=crcr]{%
17	0\\
17	0.3\\
};
\addplot [color=blue,solid,line width=2.0pt,forget plot]
  table[row sep=crcr]{%
18	0\\
18	0.3\\
};
\addplot [color=blue,solid,line width=2.0pt,forget plot]
  table[row sep=crcr]{%
19	0\\
19	0.3\\
};
\addplot [color=blue,solid,line width=2.0pt,forget plot]
  table[row sep=crcr]{%
20	0\\
20	0.3\\
};
\addplot [color=blue,solid,line width=2.0pt,forget plot]
  table[row sep=crcr]{%
21	0\\
21	0.3\\
};
\addplot [color=blue,solid,line width=2.0pt,forget plot]
  table[row sep=crcr]{%
22	0\\
22	0.3\\
};
\addplot [color=blue,solid,line width=2.0pt,forget plot]
  table[row sep=crcr]{%
23	0\\
23	0.3\\
};
\addplot [color=blue,solid,line width=2.0pt,forget plot]
  table[row sep=crcr]{%
24	0\\
24	0.3\\
};
\end{axis}

\begin{axis}[%
width=3in,
height=0.7in,
scale only axis,
xmin=0,
xmax=28,
ymin=0,
ymax=1.2,
hide axis,
name=plot6,
at=(plot5.below south west),
anchor=above north west,
axis x line*=bottom,
axis y line*=left
]
\addplot [color=white,solid,forget plot]
  table[row sep=crcr]{%
25	0\\
28	0\\
};
\addplot [color=white,solid,forget plot]
  table[row sep=crcr]{%
28	0\\
28	1.2\\
};
\addplot [color=blue,solid,line width=1.0pt,forget plot]
  table[row sep=crcr]{%
0	0\\
25	0\\
};
\addplot [color=blue,only marks,mark=o,mark options={solid},forget plot]
  table[row sep=crcr]{%
1	0.9\\
2	0.9\\
3	0.9\\
4	0.9\\
5	0.9\\
6	0.9\\
7	0.9\\
8	0.9\\
9	0.2\\
10	0.2\\
11	0.2\\
12	0.2\\
13	0.2\\
14	0.2\\
15	0.2\\
16	0.2\\
17	0.9\\
18	0.9\\
19	0.9\\
20	0.9\\
21	0.9\\
22	0.9\\
23	0.9\\
24	0.9\\
};
\addplot [color=blue,solid,line width=2.0pt,forget plot]
  table[row sep=crcr]{%
1	0\\
1	0.9\\
};
\addplot [color=blue,solid,line width=2.0pt,forget plot]
  table[row sep=crcr]{%
2	0\\
2	0.9\\
};
\addplot [color=blue,solid,line width=2.0pt,forget plot]
  table[row sep=crcr]{%
3	0\\
3	0.9\\
};
\addplot [color=blue,solid,line width=2.0pt,forget plot]
  table[row sep=crcr]{%
4	0\\
4	0.9\\
};
\addplot [color=blue,solid,line width=2.0pt,forget plot]
  table[row sep=crcr]{%
5	0\\
5	0.9\\
};
\addplot [color=blue,solid,line width=2.0pt,forget plot]
  table[row sep=crcr]{%
6	0\\
6	0.9\\
};
\addplot [color=blue,solid,line width=2.0pt,forget plot]
  table[row sep=crcr]{%
7	0\\
7	0.9\\
};
\addplot [color=blue,solid,line width=2.0pt,forget plot]
  table[row sep=crcr]{%
8	0\\
8	0.9\\
};
\addplot [color=blue,solid,line width=2.0pt,forget plot]
  table[row sep=crcr]{%
9	0\\
9	0.2\\
};
\addplot [color=blue,solid,line width=2.0pt,forget plot]
  table[row sep=crcr]{%
10	0\\
10	0.2\\
};
\addplot [color=blue,solid,line width=2.0pt,forget plot]
  table[row sep=crcr]{%
11	0\\
11	0.2\\
};
\addplot [color=blue,solid,line width=2.0pt,forget plot]
  table[row sep=crcr]{%
12	0\\
12	0.2\\
};
\addplot [color=blue,solid,line width=2.0pt,forget plot]
  table[row sep=crcr]{%
13	0\\
13	0.2\\
};
\addplot [color=blue,solid,line width=2.0pt,forget plot]
  table[row sep=crcr]{%
14	0\\
14	0.2\\
};
\addplot [color=blue,solid,line width=2.0pt,forget plot]
  table[row sep=crcr]{%
15	0\\
15	0.2\\
};
\addplot [color=blue,solid,line width=2.0pt,forget plot]
  table[row sep=crcr]{%
16	0\\
16	0.2\\
};
\addplot [color=blue,solid,line width=2.0pt,forget plot]
  table[row sep=crcr]{%
17	0\\
17	0.9\\
};
\addplot [color=blue,solid,line width=2.0pt,forget plot]
  table[row sep=crcr]{%
18	0\\
18	0.9\\
};
\addplot [color=blue,solid,line width=2.0pt,forget plot]
  table[row sep=crcr]{%
19	0\\
19	0.9\\
};
\addplot [color=blue,solid,line width=2.0pt,forget plot]
  table[row sep=crcr]{%
20	0\\
20	0.9\\
};
\addplot [color=blue,solid,line width=2.0pt,forget plot]
  table[row sep=crcr]{%
21	0\\
21	0.9\\
};
\addplot [color=blue,solid,line width=2.0pt,forget plot]
  table[row sep=crcr]{%
22	0\\
22	0.9\\
};
\addplot [color=blue,solid,line width=2.0pt,forget plot]
  table[row sep=crcr]{%
23	0\\
23	0.9\\
};
\addplot [color=blue,solid,line width=2.0pt,forget plot]
  table[row sep=crcr]{%
24	0\\
24	0.9\\
};
\end{axis}
\end{tikzpicture}
\caption{A sub-band structured basis with supports $J_k$ that partition the DFT frequency range. (a) DFTs of basis vectors. (b) Examples of frequency responses of filters in the span of the basis.}%
\label{fig:ideal}%
\end{figure}
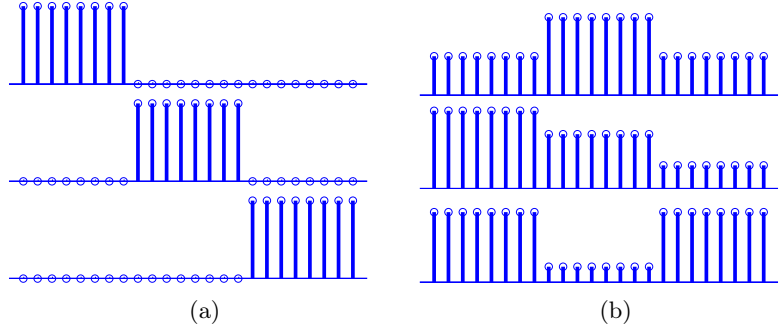

\begin{proposition}\label{pro:necessary1}
In (BD) with subspace constraints, suppose $E$ forms a sub-band structured basis, for which the supports $J_k$ ($1\leq k\leq m_2$) are disjoint and cover all the frequency components. If $(x_0,y_0)$ ($y_0$ is non-vanishing) is identifiable up to scaling, then $n\geq m_1+m_2-1$.
\end{proposition}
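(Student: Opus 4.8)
The plan is to prove the contrapositive: assuming $n \le m_1 + m_2 - 2$, I would construct a solution $(x,y)$ of (BD) that is not a scaling of $(x_0,y_0)$, contradicting identifiability. The first step is to decouple the problem across sub-bands. Writing $\widetilde{D} = F_nD$, $\widetilde{E}=F_nE$ and $A_k \eqdef \widetilde{D}^{(J_k,:)}$, I note that because the supports $J_k$ partition $\{1,\dots,n\}$, the restriction of $\widetilde{E}y$ to $J_k$ is exactly $\diag(\widetilde{E}^{(J_k,k)})y^{(k)}$, and $\widetilde{E}^{(J_k,k)}$ is non-vanishing on its support. Passing to the Fourier domain, $(Dx)\circledast(Ey) = (Dx_0)\circledast(Ey_0)$ is therefore equivalent, after cancelling the invertible diagonal on each band, to the decoupled system
\[
A_k x \, y^{(k)} = A_k x_0 \, y_0^{(k)}, \qquad k = 1,\dots,m_2.
\]
Thus identifiability up to scaling holds iff the only solutions of this system are the scalings $(\sigma x_0, \sigma^{-1} y_0)$, and it suffices to exhibit a non-scaling solution.

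Next I would dispose of degenerate bands. If $A_k x_0 = 0$ for some $k$, then keeping $x=x_0$ and replacing $y_0^{(k)}$ by any other value leaves the $k$-th equation ($0=0$) satisfied while changing $y$; this already yields a non-scaling solution, so $(x_0,y_0)$ cannot be identifiable. Hence under the identifiability hypothesis I may assume $A_k x_0 \neq 0$ for every $k$. Now I set up a dimension count on the linear subspace
\[
\mathcal{S} \eqdef \bigcap_{k=1}^{m_2} A_k^{-1}\big(\spanof(A_k x_0)\big) = \{x : A_k x \in \spanof(A_k x_0)\ \forall k\},
\]
which contains $\spanof(x_0)$. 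Since $A_k x_0 \neq 0$, the preimage $A_k^{-1}\big(\spanof(A_k x_0)\big)$ has codimension $\rank(A_k) - 1 \le \ell_k - 1$, so $\operatorname{codim}\mathcal{S} \le \sum_k(\ell_k - 1) = n - m_2$ and $\dim\mathcal{S} \ge m_1 + m_2 - n \ge 2$.

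With $\dim\mathcal{S} \ge 2$ I can choose $x' \in \mathcal{S}$ independent of $x_0$ and form $x = x_0 + \epsilon x'$, which lies in $\mathcal{S}$. For each $k$, $A_k x = c_k A_k x_0$ with $c_k = 1 + \epsilon c_k'$ (where $A_k x' = c_k' A_k x_0$), so for all sufficiently small $\epsilon \neq 0$ every $c_k \neq 0$; setting $y^{(k)} = y_0^{(k)}/c_k$ produces a genuine solution of the decoupled system with $x \notin \spanof(x_0)$, i.e., a non-scaling solution. This contradiction yields $n \ge m_1 + m_2 - 1$. I expect the main obstacle to be precisely this last passage: the dimension bound only furnishes an extra direction inside the linear space $\mathcal{S}$, and turning it into a bona fide alternative solution requires ensuring that the recovered $y$ is finite (all $c_k \neq 0$) and that the pair is genuinely distinct from the scaling orbit — which is what the small-$\epsilon$ perturbation and the preliminary removal of the degenerate case $A_k x_0 = 0$ are designed to secure.
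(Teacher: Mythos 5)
Your proof is correct, but it runs the argument on the opposite side of the bilinear problem from the paper, so the two are worth comparing. The paper also proves the contrapositive via a dimension count plus a small perturbation, but it works in the \emph{filter} variable: it introduces the entrywise inverse $\widetilde{E}_\text{inv}$ of $\widetilde{E}$ and the linear operator $\calG(w)=\widetilde{D}_\perp^*\diag(\widetilde{E}_\text{inv}w)\diag(\widetilde{E}y_0)\widetilde{D}x_0$, where $\widetilde{D}_\perp^*$ annihilates $\calR(\widetilde{D})$; since $\calG(w)=0$ imposes $n-m_1$ equations on $m_2$ unknowns and $w_0$ (the entrywise inverse of $y_0$) lies in $\calN(\calG)$, the assumption $n<m_1+m_2-1$ gives $\dim\calN(\calG)\geq 2$, and a sufficiently small perturbation $w_0+\alpha w_1$ remains non-vanishing, producing a solution whose $y$-component is not proportional to $y_0$. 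You instead decouple the equation band by band into $A_kx\,y^{(k)}=A_kx_0\,y_0^{(k)}$ with $A_k=\widetilde{D}^{(J_k,:)}$ and count dimensions in \emph{signal} space: $\dim\mathcal{S}\geq m_1-(n-m_2)\geq 2$, then perturb $x_0$ inside $\mathcal{S}$, producing a solution whose $x$-component is not proportional to $x_0$. The two counts are exactly dual, $m_2-(n-m_1)=m_1-(n-m_2)=m_1+m_2-n$, and both constructions need a small-perturbation step to avoid a division by zero (non-vanishingness of $w$ in the paper, $c_k\neq 0$ in yours). One genuine structural difference: your route forces the preliminary disposal of bands with $A_kx_0=0$ --- otherwise the preimage $A_k^{-1}\bigl(\spanof(A_kx_0)\bigr)=\calN(A_k)$ has codimension $\rank(A_k)$ rather than $\rank(A_k)-1$ and your bound fails --- and you correctly identified and handled this case; the paper's formulation needs no case split, because $w_0\in\calN(\calG)$ and its solution construction never divides by $A_kx_0$. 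What your version buys in exchange is symmetry with the sufficiency proof of Theorem \ref{thm:sbsb}, which likewise works bandwise in $x$-space: your subspace $\mathcal{S}$ is precisely the object whose triviality modulo $\spanof(x_0)$ is established there when $\sum_k(\ell_k-1)\geq m_1-1$, so your argument makes it transparent that the sufficient and necessary sample complexities meet at the same linear-algebraic threshold.
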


We turn next to (BD) with mixed constraints. Under the assumption that the passbands partition the DFT frequency range, the sufficient condition in Theorem \ref{thm:sbsbmix} implies $n\geq 2s_1+m_2-1$. Next, we show that this is almost necessary. 

\begin{corollary}\label{cor:necessary2}
In (BD) with mixed constraints, suppose $E$ forms a sub-band structured basis, for which the supports $J_k$ ($1\leq k\leq m_2$) are disjoint and cover all the frequency components. If $(x_0,y_0)$ ($x_0$ is $s_1$-sparse, $y_0$ is non-vanishing) is identifiable up to scaling, then $n\geq s_1+m_2-1$.
\end{corollary}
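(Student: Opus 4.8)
The plan is to mirror the argument behind Proposition~\ref{pro:necessary1}, with the single structural change that the $s_1$-sparsity of $x_0$ shrinks the relevant range space from dimension $m_1$ to dimension $s_1$. I would work in the Fourier domain, where the measurement is $\widetilde{z}=(\widetilde{D}x_0)\odot(\widetilde{E}y_0)$, and exploit the fact that the partition $\{J_k\}$ makes $\widetilde{E}$ block-diagonal: on each passband $J_k=\widehat{J}_k$ only the $k$th column of $\widetilde{E}$ is supported, and it is non-vanishing there. Fix a support set $K_0\supseteq\supp(x_0)$ with $|K_0|=s_1$; since $D$ satisfies the spark condition (spark $>2s_1$), the matrix $\widetilde{D}^{(:,K_0)}=F_nD^{(:,K_0)}$ has full column rank, so $\calR(\widetilde{D}^{(:,K_0)})$ is an $s_1$-dimensional subspace of $\bbC^n$.

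First I would record a consequence of identifiability: for every $k$, $(\widetilde{D}x_0)^{(J_k)}\neq 0$. Indeed, if some block vanished, then since $\widetilde{E}^{(:,k)}$ is supported only on $J_k$, perturbing the single coordinate $y_0^{(k)}$ would leave $\widetilde{z}$ unchanged, producing a solution $(x_0,y)$ with $y\neq y_0$ and contradicting condition~2 of Proposition~\ref{pro:ibd}. Writing $w=\widetilde{D}x_0$, I would then define the block-proportional subspace $W=\{v\in\bbC^n:\ v^{(J_k)}\in\spanof(w^{(J_k)})\ \text{for all }k\}$; because every $w^{(J_k)}$ is nonzero, $\dim W=m_2$ and $w\in W$. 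Feasibility of an alternative solution supported on $K_0$ is exactly the condition $\widetilde{D}x\in W$: cancelling the non-vanishing factor $\widetilde{E}^{(J_k,k)}$ on each block reduces $(Dx)\circledast(Ey)=z$ to $(\widetilde{D}x)^{(J_k)}y^{(k)}=(\widetilde{D}x_0)^{(J_k)}y_0^{(k)}$, which forces $(\widetilde{D}x)^{(J_k)}$ to be a scalar multiple of $w^{(J_k)}$.

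Next I would count dimensions inside $\bbC^n$. Since $w\in\calR(\widetilde{D}^{(:,K_0)})\cap W$, this intersection is at least one-dimensional, and $\dim\bigl(\calR(\widetilde{D}^{(:,K_0)})\cap W\bigr)\ge s_1+m_2-n$. If $n\le s_1+m_2-2$, the intersection has dimension at least $2$, so it contains some $p'=\widetilde{D}x'\notin\spanof(w)$ with $x'$ supported on $K_0$. I would then exhibit the explicit alternative $x_\epsilon=x_0+\epsilon x'$: on each block $(\widetilde{D}x_\epsilon)^{(J_k)}=(1+\epsilon b_k)\,w^{(J_k)}$ for scalars $b_k$ determined by $p'\in W$, so choosing $\epsilon\neq 0$ small keeps $c_k:=1+\epsilon b_k$ nonzero and lets me set $y_\epsilon^{(k)}=y_0^{(k)}/c_k$. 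The pair $(x_\epsilon,y_\epsilon)$ is a solution, $x_\epsilon$ is supported on $K_0$ and hence $s_1$-sparse, and $x_\epsilon\notin\spanof(x_0)$ because $p'\notin\spanof(w)$ and $\widetilde{D}$ is injective on $K_0$-supported vectors. This contradicts identifiability, forcing $n\ge s_1+m_2-1$.

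The main obstacle is not the dimension count but guaranteeing that an arbitrary direction in the intersection yields a \emph{feasible} competitor: a generic $x'$ with $\widetilde{D}x'\in W$ may have $(\widetilde{D}x')^{(J_k)}=0$ on some blocks, and a block-scaling by a zero factor cannot be inverted to recover any $y^{(k)}$. The perturbation $x_0+\epsilon x'$ circumvents this by keeping all block factors near $1$, so $y_\epsilon$ is always recoverable. I would also remark that the fixed-support restriction is exactly what yields the bound $s_1+m_2-1$ rather than the $2s_1+m_2-1$ of the sufficient condition in Theorem~\ref{thm:sbsbmix}: the necessary direction needs only one alternative solution, which may share the support of $x_0$, whereas the sufficient direction must rule out competitors whose support can differ from that of $x_0$, inflating the effective dimension to $2s_1$.
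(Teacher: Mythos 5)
Your proposal is correct, but it takes a different route from the paper. After fixing an $s_1$-element support $K_0\supseteq\supp(x_0)$, the paper simply observes that identifiability with unknown support implies identifiability with the support given, which reduces the problem to (BD) with subspace constraints of dimension $s_1$; it then invokes Proposition \ref{pro:necessary1} with $m_1$ replaced by $s_1$ and is done in three lines. You instead re-derive the necessary condition from scratch, and your construction is the dual of the one in the paper's proof of Proposition \ref{pro:necessary1}: the paper works in the filter domain, defining the operator $\calG(w)=\widetilde{D}_\perp^*\diag(\widetilde{E}_\text{inv}w)\diag(\widetilde{E}y_0)\widetilde{D}x_0$, counting its nullspace dimension by rank--nullity, and perturbing the inverse filter $w_0$ (entrywise inverse of $y_0$) by a small multiple of an independent null vector to keep it non-vanishing; you work in the signal domain, intersecting $\calR(\widetilde{D}^{(:,K_0)})$ with the block-proportional subspace $W$, counting $\dim\geq s_1+m_2-n\geq 2$, and perturbing $x_0$ by $\epsilon x'$ to keep the block scalings $1+\epsilon b_k$ nonzero. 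Both arguments hinge on the same two ingredients (a codimension count plus a small perturbation to enforce non-vanishing), and both correctly use that the passbands partition the frequency range with $\widetilde{E}^{(J_k,k)}$ nonzero there. What the paper's route buys is brevity and reuse of an already-proved result; what yours buys is a self-contained proof that exhibits the alternative pair $(x_\epsilon,y_\epsilon)$ explicitly, keeps the competitor supported on $K_0$ so the sparsity constraint is manifestly satisfied, and violates identifiability through a non-proportional \emph{signal} rather than a non-proportional \emph{filter}. Your closing remark --- that the necessary bound involves $s_1$ rather than $2s_1$ because a single competitor may share the support of $x_0$, whereas the sufficient condition must exclude competitors on any support --- is exactly the reason the paper calls this bound only ``almost'' matching Theorem \ref{thm:sbsbmix}.
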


The sample complexities in the sufficient conditions match (exactly for (BD) with subspace constraints and almost for (BD) with mixed constraints) those in the necessary conditions, hence they are optimal. The sample complexities are also optimal in the sense that the number of degrees of freedom is roughly equal to the number of measurements. We give the proofs of Proposition \ref{pro:necessary1} and Corollary \ref{cor:necessary2} in Appendix \ref{app:necessary}.

\section{Conclusions} \label{sec:conclusions}
We studied the identifiability of blind deconvolution problems with subspace or sparsity constraints. We derived two algebraic conditions on blind deconvolution with subspace constraints. We first showed using the lifting framework that blind deconvolution from $n$ observations with generic bases of dimensions $m_1$ and $m_2$ is identifiable up to scaling given that $n\geq m_1m_2$. Then we applied the general framework in \cite{Li2015} to show that blind deconvolution with a sub-band structured basis is identifiable up to scaling given that $n\geq m_1+m_2-1$. The second result was shown to be tight. These results are also generalized to blind deconvolution with sparsity constraints or mixed constraints, with sparsity level(s) replacing the subspace dimension(s). The extra cost for the unknown support in the case of sparsity constraints is an extra factor of $2$ in the sample complexity.

We acknowledge that the results in Section \ref{sec:gebf} for generic bases may not be optimal. But they provide the first algebraic conditions for feasibility of blind deconvolution with subspace or sparsity priors. Furthermore, taking advantage of the interesting sub-band structure of some bases (such as filters in a filter bank implementation of equalizers), we achieved sample complexities that are essentially optimal. Our results are derived with generic bases or frames, which means they are violated on a set of Lebesgue measure zero.

An interesting question is, without the sub-band structure, whether or not it is possible to provide an algebraic analysis of blind deconvolution that achieves optimal sample complexities. Other ongoing research topics includes identifiability in blind deconvolution with specific bases or frames that arise in applications.

\appendix
\section{Proofs of Lemma \ref{lem:fcr1}, \ref{lem:fcr2} and \ref{lem:fcr3}}\label{app:fcr}
\begin{proof}[Proof of Lemma \ref{lem:fcr1}]
The entries of $G_{DE}$ are multivariate polynomials in the entries of $D$ and $E$, or to be more specific, quadratic forms in the entries of $D$ and $E$. By Lemma 1 from \cite{Harikumar1998}, the matrix $G_{DE}$ has full column rank for almost all $D$ and $E$ if it has full column rank for at least one choice of $D$ and $E$.

We complete the proof by by showing that $G_{DE}$ has full column rank for the following choice of $D$ and $E$. Let $D=F_n^{-1}\widetilde{D}$, with $\widetilde{D}\in\bbC^{n\times m_1}$ chosen such that all its submatrices have full rank. (For example, this will hold with probability $1$ for a random matrix with iid Gaussian entries.) Let $E=F_n^{-1}\widetilde{E}$, with $\widetilde{E}\in\bbC^{n\times m_2}$ chosen such that the first $m_1m_2$ rows are the kronecker product:
\[
\widetilde{E}^{(1:m_1m_2,:)} = I_{m_2}\otimes \mathbf{1}_{m_1,1}.
\]
Let $\widetilde{G}_{DE}=F_nG_{DE}$, then the submatrix containing the first $m_1m_2$ rows of $\widetilde{G}_{DE}/\sqrt{n}$ is
\[
\frac{1}{\sqrt{n}}\widetilde{G}_{DE}^{(1:m_1m_2,:)} =
\begin{bmatrix}
\widetilde{D}^{(1:m_1,:)} &  &  &  \\
 & \widetilde{D}^{(m_1+1:2m_1,:)} &  &  \\
 &  & \ddots &  \\
 &  &  & \widetilde{D}^{(m_1m_2-m_1+1:m_1m_2,:)}
\end{bmatrix}.
\]
By the assumption that all submatrices of $\widetilde{D}$ have full rank, it follows that $\widetilde{G}_{DE}^{(1:m_1m_2,:)}/{\sqrt{n}}$ has full column rank $m_1m_2$. Therefore, $G_{DE}$ has full column rank.
\end{proof}

\begin{proof}[Proof of Lemma \ref{lem:fcr2}]
Let $D=[D_0, D_1, D_2]$, then $G_{DE}$ is a permutation of the columns of $[G_{D_0E},G_{D_1E},G_{D_2E}]$. It is sufficient to prove that $G_{DE}$ has full column rank, which follows from Lemma \ref{lem:fcr1} because the number of columns in $D$ is $m_1=t_1+2\times(s_1-t_1)=2s_1-t_1\leq 2s_1$ and $n\geq 2s_1m_2\geq m_1m_2$.
\end{proof}

\begin{proof}[Proof of Lemma \ref{lem:fcr3}]
Let $D=[D_0,D_1,D_2]$, $D'=[D_0,D_1]$ and $D''=[D_0,D_2]$, then $[G_{DE_0},G_{D'E_1},G_{D''E_2}]$ is a permutation of all the columns of $G_{D_0E_0}$, $G_{D_1E_0}$, $G_{D_2E_0}$, $G_{D_0E_1}$, $G_{D_1E_1}$, $G_{D_0E_2}$, $G_{D_2E_2}$. By Lemma 1 from \cite{Harikumar1998}, it is sufficient to show that $[G_{DE_0},G_{D'E_1},G_{D''E_2}]$ has full column rank for at least one choice of $D_0,D_1,D_2,E_0,E_1,E_2$.

We complete the proof by showing that $[G_{DE_0},G_{D'E_1},G_{D''E_2}]$ has full column rank for the following choice. Let $D_0,D_1,D_2$ be chosen such that all submatrices of $\widetilde{D}_0,\widetilde{D}_1,\widetilde{D}_2$ have full rank. Let $E_0,E_1,E_2$ be chosen such that the first $2s_1s_2$ rows of $\widetilde{E}_0,\widetilde{E}_1,\widetilde{E}_2$ are
\[
\widetilde{E}_0^{(1:2s_1s_2,:)} = 
\begin{bmatrix}
I_{t_2}\otimes \mathbf{1}_{2s_1,1}\\
\hdashline
\mathbf{0}_{s_1(s_2-t_2),t_2}\\
\hdashline
\mathbf{0}_{s_1(s_2-t_2),t_2}
\end{bmatrix},\quad
\widetilde{E}_1^{(1:2s_1s_2,:)} = 
\begin{bmatrix}
\mathbf{0}_{2s_1t_2,s_2-t_2}\\
\hdashline
I_{s_2-t_2}\otimes \mathbf{1}_{s_1,1}\\
\hdashline
\mathbf{0}_{s_1(s_2-t_2),s_2-t_2}
\end{bmatrix},\quad
\widetilde{E}_2^{(1:2s_1s_2,:)} = 
\begin{bmatrix}
\mathbf{0}_{2s_1t_2,s_2-t_2}\\
\hdashline
\mathbf{0}_{s_1(s_2-t_2),s_2-t_2}\\
\hdashline
I_{s_2-t_2}\otimes \mathbf{1}_{s_1,1}
\end{bmatrix}.
\]
By the proofs of Lemmas \ref{lem:fcr1} and \ref{lem:fcr2}, $\widetilde{G}_{DE_0}^{(1:2s_1s_2,:)}$, $\widetilde{G}_{D'E_1}^{(1:2s_1s_2,:)}$ and $\widetilde{G}_{D''E_2}^{(1:2s_1s_2,:)}$ all have full column rank, and their nonzero entries are located in three disjoint row blocks. Hence $[\widetilde{G}_{DE_0},\widetilde{G}_{D'E_1},\widetilde{G}_{D''E_2}]^{(1:2s_1s_2,:)}$ has full column rank. Therefore, $[G_{DE_0},G_{D'E_1},G_{D''E_2}]$ has full column rank.
\end{proof}

\section{Proofs of the Necessary Conditions}\label{app:necessary}
\begin{proof}[Proof of Proposition \ref{pro:necessary1}]
We show that if $n<m_1+m_2-1$, then $(x_0,y_0)$ is not identifiable up to scaling. Let $\widetilde{D}_\perp\in\bbC^{n\times (n-m_1)}$ denote a matrix whose columns form a basis for the orthogonal complement of the column space of $\widetilde{D}$. Then $\widetilde{D}_\perp^*$ is an annihilator of the column space of $\widetilde{D}$, i.e., $\widetilde{D}_\perp^*\widetilde{D}=0$. Let $\widetilde{E}_\text{inv} \in\bbC^{n\times m_2}$ denote the entrywise inverse of $\widetilde{E}$:
\[
\widetilde{E}_\text{inv}^{(j,k)} =\begin{cases}
\frac{1}{\widetilde{E}^{(i,j)}} & \text{if $\widetilde{E}^{(i,j)}\neq 0$},\\
0 & \text{if $\widetilde{E}^{(i,j)}= 0$}.
\end{cases}
\]
Consider the linear operator $\calG:\bbC^{m_2}\rightarrow \bbC^{n-m_1}$ defined by
\[
\calG(w)=\widetilde{D}_\perp^*\diag(\widetilde{E}_\text{inv} w)\diag(\widetilde{E}y_0)\widetilde{D}x_0
\]
for $w\in\bbC^{m_2}$. We claim that every non-vanishing null vector of $\calG$ produces a solution to the BD problem. Indeed, if $w_1\in\calN(\calG)$ is non-vanishing, then $\diag(\widetilde{E}_\text{inv} w_1)\diag(\widetilde{E}y_0)\widetilde{D}x_0$ is annihilated by $\widetilde{D}_\perp^*$ and therefore must reside in the column space of $\widetilde{D}$. Hence, there exists $x_1\in\bbC^{m_1}$ such that
\begin{equation}
\diag(\widetilde{E}_\text{inv} w_1)\diag(\widetilde{E}y_0)\widetilde{D}x_0 = \widetilde{D}x_1.
\label{eq:3}
\end{equation}
Now, let $y_1$ denote the entrywise inverse of $w_1$. Recall that the supports of the columns of $\widetilde{E}$ are disjoint, hence $\widetilde{E}y_1$ is the entrywise inverse of $\widetilde{E}_\text{inv} w_1$. By Equation \eqref{eq:3},
\begin{align*}
\diag(\widetilde{E}y_0)\widetilde{D}x_0 &= \diag(\widetilde{E}y_1)\widetilde{D}x_1,\\
(Dx_0)\circledast (Ey_0) &= (Dx_1) \circledast (Ey_1).
\end{align*}
Hence $(x_1,y_1)$ is a solution to the BD problem where $z=(Dx_0)\circledast (Ey_0)$. This establishes the claim.

It remains to show that $\calG$ does have a non-vanishing null vector, and that the solution it produces does not coincide, up to scaling, with $(x_0,y_0)$.
Let $w_0$ denote the entrywise inverse of $y_0$, then $w_0\in\calN(\calG)$. There are $(n-m_1)$ equations in $\calG(w)=0$. If $n<m_1+m_2-1$, then $n-m_1\leq m_2-2$ and the dimension of $\calN(\calG)$ is at least $2$. Hence, there exists a vector $w_1\in\calN(\calG)$ such that $w_0,w_1$ are linearly independent. Let $\alpha$ be a complex number such that $0<|\alpha|<\frac{1}{\|y_0\|_\infty\|w_1\|_\infty}$. Then $w_0+\alpha w_1\in\calN(\calG)$ is non-vanishing, because the entries of $w_0+\alpha w_1$ satisfy that
\[
\bigl|w_0^{(j)}+\alpha w_1^{(j)}\bigr| \geq \bigl|w_0^{(j)}\bigr|-|\alpha|\bigl| w_1^{(j)}\bigr| \geq \frac{1}{\|y_0\|_\infty}-|\alpha|\|w_1\|_\infty > 0,\quad\text{for } j=1,2,\cdots,m_2.
\]
Since $\alpha\neq 0$, the null vector $w_0+\alpha w_1$ is not a scaled version of $w_0$. It produces a solution $(x_2,y_2)$ in which $y_2$ is the entrywise inverse of $w_0+\alpha w_1$ and is not a scaled version of $y_0$. Therefore, $(x_0,y_0)$ is not identifiable up to scaling.
\end{proof}

\begin{proof}[Proof of Corollary \ref{cor:necessary2}]
The vector $x_0$ is $s_1$-sparse. If we know the support of $s_1$, then the signal $u=Dx$ resides in a subspace spanned by $s_1$ columns of $D$ and the problem reduces to BD with subspace constraints. By Proposition \ref{pro:necessary1}, if $n<s_1+m_2-1$, then $(x_0,y_0)$ cannot be identified up to scaling even if the support of $x_0$ is given. Hence $(x_0,y_0)$ is not identifiable without knowing the support. Therefore, it is necessary that $n\geq s_1+m_2-1$.
\end{proof}


\begin{thebibliography}{10}
\providecommand{\url}[1]{#1}
\csname url@samestyle\endcsname
\providecommand{\newblock}{\relax}
\providecommand{\bibinfo}[2]{#2}
\providecommand{\BIBentrySTDinterwordspacing}{\spaceskip=0pt\relax}
\providecommand{\BIBentryALTinterwordstretchfactor}{4}
\providecommand{\BIBentryALTinterwordspacing}{\spaceskip=\fontdimen2\font plus
\BIBentryALTinterwordstretchfactor\fontdimen3\font minus
  \fontdimen4\font\relax}
\providecommand{\BIBforeignlanguage}[2]{{%
\expandafter\ifx\csname l@#1\endcsname\relax
\typeout{** WARNING: IEEEtran.bst: No hyphenation pattern has been}%
\typeout{** loaded for the language `#1'. Using the pattern for}%
\typeout{** the default language instead.}%
\else
\language=\csname l@#1\endcsname
\fi
#2}}
\providecommand{\BIBdecl}{\relax}
\BIBdecl

\bibitem{Li2015c}
Y.~Li, K.~Lee, and Y.~Bresler, ``Identifiability of blind deconvolution with
  subspace or sparsity constraints,'' in \emph{Signal Process. with Adaptive
  Sparse Structured Representations (SPARS)}, 2015.

\bibitem{Kundur1996}
D.~Kundur and D.~Hatzinakos, ``Blind image deconvolution,'' \emph{{IEEE} Signal
  Process. Mag.}, vol.~13, no.~3, pp. 43--64, May 1996.

\bibitem{LitwinL.R.1999}
J.~Litwin, L.R., ``Blind channel equalization,'' \emph{{IEEE} Potentials},
  vol.~18, no.~4, pp. 9--12, Oct 1999.

\bibitem{Naylor2010}
P.~A. Naylor and N.~D. Gaubitch, \emph{Speech Dereverberation}, 1st~ed.\hskip
  1em plus 0.5em minus 0.4em\relax Springer, 2010.

\bibitem{Yilmaz2001}
\BIBentryALTinterwordspacing
O.~Yilmaz, \emph{Seismic Data Analysis: Processing, Inversion, and
  Interpretation of Seismic Data}.\hskip 1em plus 0.5em minus 0.4em\relax
  Society of Exploration Geophysicists, 2001.
\BIBentrySTDinterwordspacing

\bibitem{Chan1998}
T.~Chan and C.-K. Wong, ``Total variation blind deconvolution,'' \emph{{IEEE}
  Trans. Image Process.}, vol.~7, no.~3, pp. 370--375, Mar 1998.

\bibitem{Herrity2008b}
\BIBentryALTinterwordspacing
K.~Herrity, R.~Raich, and A.~O. Hero~III, ``Blind reconstruction of sparse
  images with unknown point spread function,'' \emph{Proc. SPIE}, vol. 6814,
  pp. 68\,140K.1--68\,140K.11, 2008.
\BIBentrySTDinterwordspacing

\bibitem{Krishnan2011}
D.~Krishnan, T.~Tay, and R.~Fergus, ``Blind deconvolution using a normalized
  sparsity measure,'' in \emph{Proc. Conf. Comput. Vision and Pattern
  Recognition (CVPR)}.\hskip 1em plus 0.5em minus 0.4em\relax IEEE, June 2011,
  pp. 233--240.

\bibitem{Asif2009}
M.~Salman~Asif, W.~Mantzel, and J.~Romberg, ``Random channel coding and blind
  deconvolution,'' in \emph{Proc. 47th Annu. Allerton Conf. Commun., Control,
  and Computing}, Sept 2009, pp. 1021--1025.

\bibitem{Ahmed2014}
\BIBentryALTinterwordspacing
A.~Ahmed, B.~Recht, and J.~Romberg, ``Blind deconvolution using convex
  programming,'' \emph{{IEEE} Trans. Inf. Theory}, vol.~60, no.~3, pp.
  1711--1732, Mar 2014.
\BIBentrySTDinterwordspacing

\bibitem{Moulines1995}
\BIBentryALTinterwordspacing
E.~Moulines, P.~Duhamel, J.~Cardoso, and S.~Mayrargue, ``Subspace methods for
  the blind identification of multichannel fir filters,'' \emph{{IEEE} Trans.
  Signal Process.}, vol.~43, no.~2, pp. 516--525, Feb 1995.
\BIBentrySTDinterwordspacing

\bibitem{Abed-meraim1997}
\BIBentryALTinterwordspacing
K.~Abed-Meraim, W.~Qiu, and Y.~Hua, ``Blind system identification,''
  \emph{Proc. {IEEE}}, vol.~85, no.~8, pp. 1310--1322, Aug 1997.
\BIBentrySTDinterwordspacing

\bibitem{Repetti2015}
A.~Repetti, M.~Pham, L.~Duval, E.~Chouzenoux, and J.-C. Pesquet, ``Euclid in a
  taxicab: Sparse blind deconvolution with smoothed {$\ell _1$}/{$\ell _2$}
  regularization,'' \emph{{IEEE} Signal Process. Lett.}, vol.~22, no.~5, pp.
  539--543, May 2015.

\bibitem{Candes2013a}
\BIBentryALTinterwordspacing
E.~J. Cand\`{e}s, T.~Strohmer, and V.~Voroninski, ``Phaselift: Exact and stable
  signal recovery from magnitude measurements via convex programming,''
  \emph{Commun. Pure Appl. Math.}, vol.~66, no.~8, pp. 1241--1274, 2013.
\BIBentrySTDinterwordspacing

\bibitem{Choudhary2013a}
\BIBentryALTinterwordspacing
S.~Choudhary and U.~Mitra, ``Identifiability scaling laws in bilinear inverse
  problems,'' \emph{arXiv preprint arXiv:1402.2637}, 2014.
\BIBentrySTDinterwordspacing

\bibitem{Choudhary2014a}
\BIBentryALTinterwordspacing
------, ``Sparse blind deconvolution: What cannot be done,'' in \emph{Proc.
  Int. Symp. Inform. Theory (ISIT)}.\hskip 1em plus 0.5em minus 0.4em\relax
  IEEE, June 2014, pp. 3002--3006.
\BIBentrySTDinterwordspacing

\bibitem{Ling2015}
S.~Ling and T.~Strohmer, ``Self-calibration and biconvex compressive sensing,''
  \emph{arXiv preprint arXiv:1501.06864}, 2015.

\bibitem{Lee2015}
K.~Lee, Y.~Li, M.~Junge, and Y.~Bresler, ``Stability in blind deconvolution of
  sparse signals and reconstruction by alternating minimization,'' \emph{Int.
  Conf. Sampling Theory and Applications (SampTA)}, 2015.

\bibitem{Li2015}
\BIBentryALTinterwordspacing
Y.~Li, K.~Lee, and Y.~Bresler, ``A unified framework for identifiability
  analysis in bilinear inverse problems with applications to subspace and
  sparsity models,'' \emph{arXiv preprint arXiv:1501.06120}, 2015.
\BIBentrySTDinterwordspacing

\bibitem{Donoho2003}
\BIBentryALTinterwordspacing
D.~L. Donoho and M.~Elad, ``Optimally sparse representation in general
  (nonorthogonal) dictionaries via $\ell_1$ minimization,'' \emph{Proc. Natl.
  Acad. Sci.}, vol. 100, no.~5, pp. 2197--2202, 2003.
\BIBentrySTDinterwordspacing

\bibitem{Harikumar1998}
\BIBentryALTinterwordspacing
G.~Harikumar and Y.~Bresler, ``{FIR} perfect signal reconstruction from
  multiple convolutions: minimum deconvolver orders,'' \emph{{IEEE} Trans.
  Signal Process.}, vol.~46, no.~1, pp. 215--218, Jan 1998.
\BIBentrySTDinterwordspacing

\end{thebibliography}

\end{document}